\newtheorem{theorem}{Theorem}[section]
\newtheorem{proposition}[theorem]{Proposition}
\newtheorem{corollary}[theorem]{Corollary}
\newtheorem{remark}[theorem]{Remark}
\newtheorem{lemma}[theorem]{Lemma}
\newtheorem{definition}[theorem]{Definition}
\numberwithin{equation}{section}
\begin{document}
\title{Does symmetry Imply PPT Property?}
\author[Cariello ]{D. Cariello}
\thanks{PACS numbers: 03.65.Ud, 03.67.Mn}
\thanks{Key words and Phrases: Symmetric State, Hermitian Schmidt Decomposition, PPT, SPC, Tensor Rank}

\address{Faculdade de Matem\'atica, \newline\indent Universidade Federal de Uberl\^{a}ndia, \newline\indent 38.400-902 Ð Uberl\^{a}ndia, Brazil.}\email{dcariello@famat.ufu.br}

\address{Departamento de An\'{a}lisis Matem\'{a}tico,\newline\indent Facultad de Ciencias Matem\'{a}ticas, \newline\indent Plaza de Ciencias 3, \newline\indent Universidad Complutense de Madrid,\newline\indent Madrid, 28040, Spain.}
\email{dcariell@ucm.es}

\keywords{}

\subjclass[2010]{}

\maketitle

\begin{abstract} Recently, in \cite{cariello}, the author proved that many results that are true for PPT matrices also hold for another class of matrices with a certain symmetry in their Hermitian Schmidt decompositions. These matrices were called SPC in \cite{cariello} $($definition \ref{defSPC}$)$. Before that, in \cite{guhne}, T\'oth and G\"uhne proved that if a state is symmetric then it is PPT if and only if it is SPC.  A natural question appeared: What is the connection between SPC matrices and PPT matrices?
Is every SPC matrix PPT?

Here we show that every SPC matrix is PPT in $M_2\otimes M_2($theorem \ref{SPC}$)$. This theorem is a consequence of the fact that every density matrix in  $M_2\otimes M_m$, with tensor rank smaller or equal to 3, is separable $($theorem \ref{tensorrank3}$)$. This theorem is a generalization of the same result found in \cite{cariello} for tensor rank 2 matrices in $M_k\otimes M_m$.

Although, in $M_3\otimes M_3$, there exists a SPC matrix with tensor rank 3 that is not PPT $($proposition \ref{example}$)$. We shall also provide a non trivial example of a family of matrices in $M_k\otimes M_k$, in which both, the SPC and PPT properties, are equivalent $($proposition \ref{nontrivial}$)$. Within this family, there exists a non trivial subfamily in which the SPC property is equivalent to separability $($proposition \ref{exampleseparable}$)$.

\end{abstract}

\section*{Introduction}

The PPT property is an important concept in Quantum Information Theory. Since the PPT property was noticed to be a   necessary condition for separability of density matrices $($\cite{Peres}$)$, many papers were published regarding applications or characterizations of PPT property, e.g., \cite{kossakowski}, \cite{kossakowski2}, \cite{hildebrand}, \cite{horodecki},  \cite{Steinhoff}, \cite{guhne}. 

The most important feature of this property  was proved by Horodecki in \cite{horodecki}: The PPT property is equivalent to separability in the space $M_2\otimes M_m$, $m=2,3$.

We can refer, for example, to the following papers devoted to find classes of PPT matrices: \cite{kossakowski}, \cite{kossakowski2}, \cite{Steinhoff}.

With respect to papers devoted to characterize the PPT property by means of other properties, one example is Hildebrand's work \cite{hildebrand}. He found a necessary and sufficient condition for an operator acting on a $n-$dimensional Hilbert space, $H_n$, to be PPT in any possible decomposition of $H_n$ as $H_k\otimes H_m$, for $n=km$. The analogous result for the case $n=4$ was proved in \cite{Verstraete}.

Another example is the paper \cite{guhne} of T\'oth and G\"uhne. They defined a symmetric state $\rho$, as a state that satisfies $\rho T=T\rho=\rho$, where $T$ is the flip operator. They showed that $\rho$ is PPT if and only if the Hermitian Schmidt decomposition of $\rho$ is $\sum_{i=1}^n\lambda_i\gamma_i\otimes\gamma_i$, with $\lambda_i>0$ .

In \cite{cariello}, the author noticed that even if we remove the hypothesis of $\rho$ being symmetric, in the sense of T\'oth and G\"uhne, the positive matrices with that symmetric Hermitian Schmidt decomposition share many properties with PPT matrices. Matrices with that symmetric Hermitian Schmidt decomposition were called  SPC matrices in \cite{cariello} $($See definition \ref{defSPC}$)$. 

The author of \cite{cariello} proved that the following results hold for SPC and PPT matrices:

\begin{enumerate}
\item If a SPC matrix $A$ has the Hermitian Schmidt decomposition $\sum_{i=1}^n\gamma_i\otimes\gamma_i$ $( \lambda_i=1$ for every i$)$ then A is separable. \\
If a PPT matrix $B$ has the Hermitian Schmidt decomposition $\sum_{i=1}^n\gamma_i\otimes\delta_i$ $($all the coefficients equal to 1$)$ then $B$ is separable.
\item SPC/PPT matrices have Split Decompositions.
\item SPC/PPT matrices are weakly irreducible or a sum of weakly irreducible SPC/PPT matrices.
\item The descriptions of weakly irreducible  SPC/PPT matrices are similar.
\item There are sharp inequalities providing separability for SPC/PPT matrices. 
\end{enumerate}

 After all this evidence, we shall make a question: Is every SPC matrix PPT?
  This paper is devoted to the study of this question. 
 
 We show in section 4 that every SPC matrix in $M_2\otimes M_2$ is PPT $($theorem \ref{SPC}$)$ and separable by Horodecki's theorem. Thus, in some sense, symmetry implies separability in $M_2\otimes M_2$. In order to obtain this result, we prove in section 3 that every positive semidefinite matrix in $M_2\otimes M_m$ with tensor rank smaller or equal to 3 is separable$($theorem \ref{tensorrank3}$)$.  It was proved in \cite{cariello} that every positive semidefinite matrix in $M_k\otimes M_m$ with tensor rank 2 is separable. Thus, our result regarding tensor rank 3 matrices, generalizes this result for the space $M_2\otimes M_m$. We prove that both results can not be extended to higher dimension. As a matter of fact, we show in section 5 that exists a SPC matrix in $M_3\otimes M_3$ with tensor rank 3 which is not PPT, therefore it is not separable $($proposition \ref{example}$)$.
 We obtain  these results using properties of the linear tranformation $S$ defined in \ref{defS}. 
 We give a very simple proof of T\'oth and G\"uhne's theorem using properties of this $S$ and, finally, we show a non trivial example of a family of matrices in $M_k\otimes M_k$, in which the SPC property and  the PPT property are equivalent $($proposition \ref{nontrivial}$)$ and inside this family, we discover a non trivial subfamily in which the SPC property and separability are equivalent $($proposition \ref{exampleseparable}$)$. 
 
\section{Preliminary Results and Definitions}

In this section we provide the definitions and the preliminary results  used in the main results of this paper. Lemmas \ref{prop1S} and \ref{hermitianeigenvector} are used quite a few times.

Let $M_k$ denote the set of complex matrices of order $k$. We shall identify the tensor product space $\mathbb{C}^n\otimes\mathbb{C}^k$ with $\mathbb{C}^{nk}$ and the tensor product space $M_{k}\otimes M_{m}$ with $M_{km}$, via Kronecker product. It allow us to write $(v\otimes w)(r\otimes s)^t= vr^t\otimes ws^t$, where $v\otimes w$ is a column and $(v\otimes w)^t$ its transpose.  Therefore if $x,y\in\mathbb{C}^n\otimes\mathbb{C}^m$ we have $xy^t\in M_n\otimes M_m$.
The trace of a matrix $A$ is denoted by $tr(A)$ and $A^t$ shall stand for the transpose of $A$.

\begin{definition}\label{defSPC}$($\textbf{SPC matrices}$)$ Let $A\in M_k\otimes M_k\simeq M_{k^2}$ be a positive semidefinite Hermitian matrix. We say that $A$ is SPC, if $A$ has the following symmetric Hermitian Schmidt decomposition with positive coefficients: $\sum_{i=1}^n\lambda_i\gamma_i\otimes\gamma_i$, with $\lambda_i>0$, for every $i$.
\end{definition}
\begin{remark} The SPC matrices can be defined using only the concept of Hermitian decomposition. See corollary \ref{defSPC2} for a simpler description.
\end{remark}

\begin{definition}\label{defPPT}$($\textbf{PPT matrices}$)$ Let $A=\sum_{i=1}^nA_i\otimes B_i\in M_k\otimes M_m\simeq M_{km}$ be a positive semidefinite Hermitian matrix. We say that $A$ is positive under partial transposition or simply  PPT, if $A^{t_2}=Id\otimes(\cdot)^t(A)=\sum_{i=1}^nA_i\otimes B_i^t$ is positive semidefinite.
\end{definition}

\begin{definition}\label{definitionseparability}\textbf{$($Separable Matrices$)$} Let $A\in M_k\otimes M_m$.
We say that $A$ is separable if $A=\sum_{i=1}^n C_i\otimes D_i$ such that   $C_i\in M_k$ and $D_i\in M_m$ are positive semi-definite Hermitian matrices for every $i$.
\end{definition}

\begin{definition}\label{definition1} 
\begin{enumerate}
\item Denote by $A\circ B$ the Schur product of $A,B\in M_k$.
\item Let $T\in M_k\otimes M_k$ be the flip operator, i.e.,\\ $T(a\otimes b)=b\otimes a$, for every $a,b\in\mathbb{C}^k$.
\item Let $F:M_k\rightarrow \mathbb{C}^k\otimes\mathbb{C}^k$,  $F(\sum_{i=1}^n a_ib_i^t)=\sum_{i=1}^na_i\otimes b_i$. 
\item We say that $v\in\mathbb{C}^k\otimes\mathbb{C}^k$ is Hermitian if $F^{-1}(v)\in M_k$ is Hermitian.

\end{enumerate}
\end{definition}

\begin{definition}\label{defS} Let $S:M_{k}\otimes M_k\rightarrow M_{k}\otimes M_k$ be defined by 
$$S(\sum_{i=1}^nA_i\otimes B_i)= \sum_{i=1}^nF(A_i)F(B_i)^t.$$
\end{definition}

\begin{lemma}\label{prop1S}  Let $S:M_{k}\otimes M_k\rightarrow M_{k}\otimes M_k$ be the linear transformation defined in \ref{defS}. Let $v_i,w_i\in\mathbb{C}^k\otimes\mathbb{C}^k$ then $S(\sum_{i=1}^nv_iw_i^t)=\sum_{i=1}^n F^{-1}(v_i)\otimes F^{-1}(w_i)$ and $S^2=Id:M_{k}\otimes M_k\rightarrow M_{k}\otimes M_k$.
\end{lemma}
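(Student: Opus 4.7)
The plan is to verify both claims by direct bilinear calculation. First, I would observe that $S$ is well-defined: the map $(A,B)\mapsto F(A)F(B)^t$ is bilinear, so by the universal property of the tensor product it extends uniquely to a linear map on $M_k\otimes M_k$, and consequently the value of $S$ on any representation $\sum_i A_i\otimes B_i$ is independent of the chosen decomposition.

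For the first identity, I would fix expansions $v=\sum_i a_i\otimes b_i$ and $w=\sum_j c_j\otimes d_j$, so that $F^{-1}(v)=\sum_i a_ib_i^t$ and $F^{-1}(w)=\sum_j c_jd_j^t$. Using the identity $(a\otimes b)(c\otimes d)^t=ac^t\otimes bd^t$ recalled at the start of Section~1, one expands
$$vw^t=\sum_{i,j} a_ic_j^{\,t}\otimes b_id_j^{\,t}.$$
Applying $S$ term by term yields $\sum_{i,j}F(a_ic_j^{\,t})F(b_id_j^{\,t})^t$, and since $F$ sends a rank-one matrix $xy^t$ to $x\otimes y$, this equals
$$\sum_{i,j}(a_i\otimes c_j)(b_i\otimes d_j)^t=\sum_{i,j} a_ib_i^{\,t}\otimes c_jd_j^{\,t},$$
which factors as $\bigl(\sum_i a_ib_i^{\,t}\bigr)\otimes\bigl(\sum_j c_jd_j^{\,t}\bigr)=F^{-1}(v)\otimes F^{-1}(w)$.

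For $S^2=\mathrm{Id}$, I would work on simple tensors $A\otimes B$ and invoke the identity just proved: by definition $S(A\otimes B)=F(A)F(B)^t$ is of the form $vw^t$ with $v=F(A)$ and $w=F(B)$, so the first part gives
$$S^2(A\otimes B)=S\bigl(F(A)F(B)^t\bigr)=F^{-1}(F(A))\otimes F^{-1}(F(B))=A\otimes B.$$
Linearity of $S$ then extends this equality to all of $M_k\otimes M_k$.

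There is no substantial obstacle here; both claims reduce to careful bookkeeping with the identification $F$ and with the rule $(a\otimes b)(c\otimes d)^t=ac^t\otimes bd^t$. The only point requiring a little care is the well-definedness of $S$ on nonunique tensor decompositions, handled as above by the universal property of $\otimes$.
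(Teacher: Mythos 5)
Your proof is correct and follows essentially the same route as the paper: reduce by (bi)linearity to rank-one vectors/simple tensors, compute with the rule $(a\otimes b)(c\otimes d)^t=ac^t\otimes bd^t$, and deduce $S^2=\mathrm{Id}$ from the first identity. The only addition is your explicit check that $S$ is well-defined on non-unique tensor decompositions, which the paper leaves implicit; this is a worthwhile but minor refinement.
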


\begin{proof} Since $S$ is  a linear tranformation, we just need to prove the formula for $n=1$. Since $S(vw^t)$ and $F^{-1}(v)\otimes F^{-1}(w)$  are linear on the variables $v$ an $w$,  we just need to show the theorem for $v=a\otimes b$ and $w=c\otimes d$.

Notice that $vw^t=ac^t\otimes bd^t$ and $$S(ac^t\otimes bd^t)=F(ac^t)\otimes F(bd^t)=(a\otimes c)(b\otimes d)^t=ab^t\otimes cd^t.$$

Now $F^{-1}(v)=ab^t$ and $F^{-1}(w)=cd^t$.
Finally, notice that $S^2(\sum_{i=1}^nv_iw_i^t)=S(\sum_{i=1}^nF^{-1}(v_i)\otimes F^{-1}(w_i))=\sum_{i=1}^nv_iw_i^t$.  
\end{proof}
 \begin{remark} \label{spectral} Remind that $F$ is an isometry, i.e., $tr(F(A)\overline{F(B)}^t)=tr(AB^*)$, for every $A,B\in M_k$, and $tr(F^{-1}(v)F^{-1}(w)^*)=tr(v\overline{w}^t)$, for every $v,w\in\mathbb{C}^k$. Therefore,  $A=\sum_{i=1}^n\lambda_i\gamma_i\otimes\overline{\gamma_i}$, such that $\{\gamma_1,\ldots,\gamma_n\}$ is a orthonormal set of matrices and $\lambda_i\in\mathbb{R}$,  if and only if, $\sum_{i=1}^n\lambda_iv_i\overline{v}_i^t$ is a spectral decomposition of $S(A)$, where $F(\gamma_i)=v_i$.
 \end{remark}

\begin{lemma}\label{hermitianeigenvector} Let $A\in M_k\otimes M_k$ be a Hermitian matrix. The following conditions are equivalent:
\begin{enumerate}
\item  $A=\sum_{i}\lambda_i\gamma_i\otimes\gamma_i^t$, such that $\lambda_i$ are real numbers and $\gamma_i$ Hermitian matrices.
\item $A=\sum_{j} \alpha_jv_j\overline{v_j}^t$, such that $\alpha_j$ are real numbers and $v_j$ Hermitian vectors.
\item  Exists a basis of $\mathbb{C}^k\otimes\mathbb{C}^k$ formed by Hermitian eigenvectors of $A$.
\end{enumerate}
\end{lemma}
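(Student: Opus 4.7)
The plan is to introduce an auxiliary property---namely, that $A$ preserves the real subspace $H \subset \mathbb{C}^k\otimes\mathbb{C}^k$ of all Hermitian vectors---and to show that all three stated conditions are equivalent to it. Two structural facts will be used repeatedly: (i) since every $X \in M_k$ splits uniquely into its Hermitian and anti-Hermitian parts, one has $\mathbb{C}^k\otimes\mathbb{C}^k = H \oplus iH$ and $\dim_{\mathbb{R}} H = k^2$, so any $\mathbb{R}$-basis of $H$ is automatically a $\mathbb{C}$-basis of $\mathbb{C}^{k^2}$; (ii) the standard Hermitian inner product on $\mathbb{C}^{k^2}$ restricts to a real-valued inner product on $H$, since $tr(\gamma^*\delta) = tr(\gamma\delta) \in \mathbb{R}$ for Hermitian $\gamma,\delta \in M_k$.

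For the easy implications: (1) $\Rightarrow$ ``$A$ preserves $H$'' because $(\gamma\otimes\gamma^t)F(X) = F(\gamma X \gamma)$ is Hermitian whenever $\gamma, X$ are; (2) $\Rightarrow$ ``$A$ preserves $H$'' because each term $\alpha_j v_j\overline{v_j}^t$ sends a Hermitian vector $w$ to the real multiple $\alpha_j\langle v_j, w\rangle v_j \in H$. The equivalence ``$A$ preserves $H$'' $\Leftrightarrow$ (3) then follows by applying the real spectral theorem to the restriction $A|_H$, which is a symmetric real-linear operator on the real Euclidean space $H$ (symmetric because $A = A^*$ on $\mathbb{C}^{k^2}$ and the inner product is real on $H$); the resulting orthonormal $\mathbb{R}$-basis of eigenvectors in $H$ is, by (i), a $\mathbb{C}$-basis of $\mathbb{C}^{k^2}$. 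For (3) $\Rightarrow$ (2), I would run Gram--Schmidt within each eigenspace of $A$ (Hermiticity is preserved along the way because the projection coefficients $\langle v_i, v_j\rangle/\|v_i\|^2$ are real by (ii)) to produce an orthonormal Hermitian eigenbasis, and then read off the spectral decomposition $A = \sum_j \mu_j v_j v_j^*$ as form (2).

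The main obstacle is (3) $\Rightarrow$ (1). The naive guess $A \overset{?}{=} \sum_j \mu_j \gamma_j\otimes\gamma_j^t$ with $\gamma_j = F^{-1}(v_j)$ fails already in $M_2\otimes M_2$: for $v_1 = F(I/\sqrt{2})$ the rank-one matrix $v_1 v_1^*$ is not equal to $(I/\sqrt{2})\otimes(I/\sqrt{2})$. Instead, I plan to argue by dimension. Fix a real orthonormal basis $\{\gamma_a\}_{a=1}^{k^2}$ of the space of Hermitian matrices in $M_k$. By polarization, the real span of $\{\gamma \otimes \gamma^t : \gamma \text{ Hermitian}\}$ equals the set of symmetric tensors $\sum_{a,b} s_{ab}\,\gamma_a\otimes\gamma_b^t$ with $s = s^T \in \mathbb{R}^{k^2\times k^2}$, hence has real dimension $\binom{k^2+1}{2}$. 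On the other hand, a Hermitian matrix preserving $H$ is uniquely determined by its restriction $A|_H$, which is a symmetric real-linear operator on the $k^2$-dimensional real Euclidean space $H$; such operators form a real space of the same dimension $\binom{k^2+1}{2}$. Since the form-(1) span sits inside the preserves-$H$ space and the two dimensions coincide, the two subspaces are equal, which proves (3) $\Rightarrow$ (1) and closes the cycle.
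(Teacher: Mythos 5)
Your proof is correct, but it reaches the hard implication by a genuinely different route than the paper. The paper proves $(1)\Rightarrow(3)$ by hand: since $A$ of form (1) maps Hermitian vectors to Hermitian vectors and has real eigenvalues, the Hermitian and anti-Hermitian parts of any eigenvector are separately eigenvectors --- this is essentially a concrete version of your ``real spectral theorem applied to $A|_H$'' step. Where you really diverge is in closing the cycle: the paper gets $(2)\Rightarrow(1)$ by exploiting the involution $S$ of Lemma \ref{prop1S}. Applying $S$ to a form-(2) decomposition $\sum_j\alpha_j v_j\overline{v_j}^t$ produces a form-(1) decomposition of $S(A)$; the already-established implication $(1)\Rightarrow(2)$ then gives a Hermitian spectral decomposition of $S(A)$, and $S^2=Id$ carries it back to an explicit form-(1) decomposition of $A$. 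You instead prove $(3)\Rightarrow(1)$ by polarization and a dimension count: the real span of $\{\gamma\otimes\gamma^t:\gamma\ \text{Hermitian}\}$ and the space of Hermitian operators preserving $H$ both have real dimension $\binom{k^2+1}{2}$, and one contains the other. Both arguments are sound. The paper's version is constructive (it exhibits the $\gamma_i$ as $F^{-1}$ of an orthonormal Hermitian eigenbasis, which is what Corollary \ref{defSPC2} and Lemma \ref{shape} actually consume later) and showcases the operator $S$ that drives the rest of the paper; your version is independent of $S$ and isolates the cleaner structural statement that, for Hermitian $A$, all three conditions are equivalent to $A$ preserving the real form $H$, at the cost of being non-constructive in the step that produces the $\gamma_i$.
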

\begin{proof} $(1)\Rightarrow (3)$
Since $\gamma_i^t=\overline{\gamma_i}$, because $\gamma_i$ is Hermitian, then $Av$ is Hermitian for every Hermitian $v\in\mathbb{C}^k\otimes\mathbb{C}^k$.

Let $w\in\mathbb{C}^k\otimes\mathbb{C}^k$ be an eigenvector of $A$ associated to the eigenvalue $\lambda$. Let $w=w_1+iw_2$, where $w_1,w_2$ are Hermitian vectors. Since $A$ is a Hermitian matrix, $\lambda$ is a real number. Notice that $Aw=Aw_1+iAw_2=\lambda w_1+i \lambda w_2$. 

Now $Aw_1-\lambda w_1=i (\lambda w_2-Aw_2)$. Since $Aw_1-\lambda w_1$ and $\lambda w_2-Aw_2$ are Hermitian vectors, we obtain $0=Aw_1-\lambda w_1=\lambda w_2-Aw_2$.

Thus, every eigenvector of $A$ is a linear combination of Hermitian eigenvectors of $A$. Thus there is a set of Hermitian eigenvectors of $A$ that span a basis for $\mathbb{C}^k\otimes\mathbb{C}^k$ and we may extract a basis from this set.

$(3)\Rightarrow (2)$ Since there is a basis of $\mathbb{C}^k\otimes\mathbb{C}^k$ formed by Hermitian eigenvectors of $A$, we can obtain an orthonormal basis of Hermitian eigenvectors. Therefore we obtain a spectral decomposition $A=\sum_{j} \alpha_jv_j\overline{v_j}^t$, where $\alpha_j$ are real numbers and $v_j$ Hermitian vectors.

$(2)\Rightarrow (1)$ By hypothesis, $A=\sum_{j=1}^n\alpha_jv_j\overline{v_j}^t$, where $\alpha_j$ is a real number and $v_j$ is Hermitian for every $j$, i.e., $F^{-1}(v_j)$ is a Hermitian matrix.
Notice that $F^{-1}(\overline{v_j})=\overline{F^{-1}(v_j)}=F^{-1}(v_j)^t$.

By lemma \ref{prop1S}, we have $S(A)=S(\sum_{j=1}^n\alpha_jv_j\overline{v_j}^t)=\sum_{j=1}^n\alpha_j F^{-1}(v_j)\otimes F^{-1}(v_j)^t$. Notice that $S(A)$ is a Hermitian matrix, since $\alpha_j\in\mathbb{R}$ and  $F^{-1}(v_j)$ is Hermitian for every $j$.

Since we have already proved that $(1\Rightarrow 2)$ then  $S(A)=\sum_{i=1}^m\lambda_i w_i\overline{w_i}^t$, where $w_i$ is Hermitian for every $i$.
By lemma \ref{prop1S}, $A=S^2(A)= S(\sum_{i=1}^m\lambda_i w_i\overline{w_i}^t)=\sum_{i=1}^m\lambda_i F^{-1}(w_i)\otimes F^{-1}(\overline{w_i})$.

Finally, since $F^{-1}(\overline{w_i})=\overline{F^{-1}(w_i)}$ and $F^{-1}(w_i)$ is Hermitian then $\overline{F^{-1}(w_i)}=F^{-1}(w_i)^t$ and $A=\sum_{i=1}^m\lambda_i F^{-1}(w_i)\otimes F^{-1}(w_i)^t$.
\end{proof}

\begin{corollary} \label{defSPC2} Let $A\in M_k\otimes M_k$ be a positive semidefinite Hermitian matrix. $A$ is SPC if and only if $A$ has a Hermitian decomposition $\sum_{i=1}^n \alpha_i A_i\otimes A_i$ with $\alpha_i>0$ for every $i$.
\end{corollary}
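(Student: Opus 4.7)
The forward direction is immediate from Definition \ref{defSPC}: a symmetric Hermitian Schmidt decomposition $\sum\lambda_i\gamma_i\otimes\gamma_i$ with $\lambda_i>0$ and $\gamma_i$ Hermitian is, in particular, a Hermitian decomposition $\sum\alpha_iA_i\otimes A_i$ with positive coefficients. So the real content is the converse.

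For the converse, suppose $A=\sum_{i=1}^n\alpha_iA_i\otimes A_i$ with $\alpha_i>0$ and each $A_i$ Hermitian. The plan is to orthogonalize the $A_i$ inside the real vector space of Hermitian matrices, then diagonalize the resulting real symmetric Gram-type matrix to recover a Schmidt-form symmetric decomposition. First I would observe that the space $H_k$ of $k\times k$ Hermitian matrices is a real vector space of real dimension $k^2$, and that the Hilbert--Schmidt pairing $\langle X,Y\rangle=tr(XY)$ restricts to a real-valued inner product on $H_k$ (since for $X,Y$ Hermitian, $\overline{tr(XY)}=tr(Y^*X^*)=tr(YX)=tr(XY)$).

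Next, apply real Gram--Schmidt in $H_k$ to produce a real orthonormal set $\{C_1,\ldots,C_r\}$ spanning $\mathrm{span}_{\mathbb{R}}\{A_1,\ldots,A_n\}$, and write $A_i=\sum_{j=1}^r c_{ij}C_j$ with $c_{ij}\in\mathbb{R}$. Substituting into the expression for $A$ gives
\[
A=\sum_{j,l=1}^r M_{jl}\,C_j\otimes C_l,\qquad M_{jl}=\sum_{i=1}^n \alpha_i c_{ij}c_{il}.
\]
In matrix form $M=c^{t}\mathrm{diag}(\alpha_1,\ldots,\alpha_n)\,c\in M_r(\mathbb{R})$, so $M$ is real symmetric and positive semidefinite because each $\alpha_i>0$.

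Finally, diagonalize $M=O\Lambda O^{t}$ with $O\in O(r)$ real orthogonal and $\Lambda=\mathrm{diag}(\lambda_1,\ldots,\lambda_r)$, $\lambda_m\geq 0$. Setting $\gamma_m=\sum_{j}O_{jm}C_j$, each $\gamma_m$ is Hermitian (real combination of Hermitian matrices) and the set $\{\gamma_m\}$ is orthonormal (since $O$ is orthogonal and $\{C_j\}$ is orthonormal). A direct substitution yields
\[
A=\sum_{m=1}^r \lambda_m\,\gamma_m\otimes\gamma_m,
\]
and discarding the terms with $\lambda_m=0$ gives a symmetric Hermitian Schmidt decomposition with strictly positive coefficients, which is exactly the SPC property. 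There is no serious obstacle here; the crucial point that makes the argument go through is that both Gram--Schmidt and the diagonalization of $M$ can be performed over $\mathbb{R}$, which guarantees that each $\gamma_m$ remains inside $H_k$ rather than drifting into a complex linear combination of the $A_i$.
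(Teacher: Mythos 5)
Your proof is correct, and it takes a genuinely different route from the paper's. The paper's proof stays inside its $S$-machinery: it applies $S$ to $A^{t_2}=\sum_i\alpha_iA_i\otimes\overline{A_i}$ to get $\sum_i\alpha_iv_i\overline{v_i}^t$ with $v_i=F(A_i)$, which is manifestly positive semidefinite; it then invokes Lemma \ref{hermitianeigenvector} to produce a spectral decomposition with Hermitian eigenvectors $w_i$ and positive eigenvalues, and applies $S$ once more to land on $\sum_i\lambda_iF^{-1}(w_i)\otimes F^{-1}(w_i)$, with orthonormality supplied by the isometry property of $F$ (Remark \ref{spectral}). Your argument bypasses $S$, the partial transpose, and Lemma \ref{hermitianeigenvector} entirely, working instead with real Gram--Schmidt in the real inner-product space of Hermitian matrices and the orthogonal diagonalization of the real positive semidefinite coefficient matrix $M=c^{t}\mathrm{diag}(\alpha)c$. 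The two proofs are secretly the same spectral-theorem fact — your $M$ is essentially the matrix of $S(A^{t_2})$ restricted to the relevant real subspace in the $C_j$ basis — but yours is more self-contained and elementary, while the paper's buys uniformity with the techniques used throughout (the same $S$-based reasoning reappears in Lemma \ref{shape}, Theorem \ref{theoremGuhne}, and Lemma \ref{lemmaexample}). Your closing observation correctly isolates the one point that actually matters: everything must be done over $\mathbb{R}$ so that the new $\gamma_m$ stay Hermitian, which is exactly the role the Hermitian-eigenvector lemma plays in the paper's version.
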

\begin{proof}
If $A$ is SPC then it is obvious that $A$ has a Hermitian decomposition required in this corollary. Suppose $A=\sum_{i=1}^n \alpha_i A_i\otimes A_i$, where $A_i$ is Hermitian for every $i$ and $\alpha_i>0$.

Notice that $S(A^{t_2})=S(\sum_{i=1}^n \alpha_i A_i\otimes A_i^t)= S(\sum_{i=1}^n \alpha_i A_i\otimes \overline{A_i})=\sum_{i=1}^n \alpha_i v_i\overline{v_i}^t$ is a positive semidefinite Hermitian matrix with $v_i=F(A_i)$.

By lemma \ref{hermitianeigenvector}, $S(A^{t_2})$ has a spectral decomposition $\sum_{i=1}^m\lambda_iw_i\overline{w_i}^t$ with Hermitian eigenvectors $w_i$. Remind that $\lambda_i>0$.

Therefore, $A^{t_2}=S(\sum_{i=1}^m\lambda_iw_i\overline{w_i}^t)=\sum_{i=1}^m\lambda_i F^{-1}(w_i)\otimes F^{-1}(w_i)^t$. Remind by remark \ref{spectral} that this is a Hermitian Schmidt decomposition of $A^{t_2}$ and thus $\sum_{i=1}^m\lambda_i F^{-1}(w_i)\otimes F^{-1}(w_i)$ is a Hermitian Schmidt decomposition of $A$. Therefore $A$ is SPC.
\end{proof}

\section{T\'oth and G\"uhne's Theorem}

In this paper, we shall employ lemma \ref{hermitianeigenvector} quite several times. This lemma was obtained using some properties $($lemma \ref{prop1S}$)$ of the linear transformation $S$ defined in \ref{defS}.

In this small section, we show that there are other properties of $S\ ($lemma \ref{prop2S}$)$ that can be used, for example, to obtain a very simple proof of theorem \ref{theoremGuhne} obtained by T\'oth and G\"uhne  in \cite{guhne}.
We only need the following two formulas.

\begin{lemma} \label{prop2S} Let $A\in M_k\otimes M_k$ then $A^{t_2}=S(AT)T$ and $S(A^{t_2})=S(A)T$.
\end{lemma}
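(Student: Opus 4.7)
The plan is to verify both identities by linearity, reducing to the case where $A$ is a simple tensor product of rank-one matrices; invoking Lemma~\ref{prop1S} then makes every application of $S$ immediate. Concretely, since $A^{t_2}$, $S$, and right multiplication by $T$ are all $\mathbb{C}$-linear in $A$, it suffices to check the two formulas on
\[
A=(a\otimes b)(c\otimes d)^t=ac^t\otimes bd^t,\qquad a,b,c,d\in\mathbb{C}^k.
\]
For such an $A$, the definition of the partial transpose gives $A^{t_2}=ac^t\otimes db^t=(a\otimes d)(c\otimes b)^t$.

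For the first identity I would use that $T$ is symmetric and $T(c\otimes d)=d\otimes c$, whence $(c\otimes d)^tT=(d\otimes c)^t$ and $AT=(a\otimes b)(d\otimes c)^t$. Lemma~\ref{prop1S} then yields $S(AT)=F^{-1}(a\otimes b)\otimes F^{-1}(d\otimes c)=ab^t\otimes dc^t$, and the claim $A^{t_2}=S(AT)T$ reduces to the elementary identity $(ab^t\otimes dc^t)T=ac^t\otimes db^t$. This in turn follows from expanding $T=\sum_{i,j}e_ie_j^t\otimes e_je_i^t$: each summand produces a scalar factor $b_ic_j$, and the two inner sums collapse to $b^t$ and $c^t$ on the appropriate sides. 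More generally, the same calculation yields $(xy^t\otimes uv^t)T=xv^t\otimes uy^t$ for any vectors $x,y,u,v\in\mathbb{C}^k$.

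For the second identity, I would apply Lemma~\ref{prop1S} to $A^{t_2}=(a\otimes d)(c\otimes b)^t$ to obtain $S(A^{t_2})=ad^t\otimes cb^t$, and to $A=(a\otimes b)(c\otimes d)^t$ to obtain $S(A)=ab^t\otimes cd^t$. The general identity $(xy^t\otimes uv^t)T=xv^t\otimes uy^t$, applied with $x=a$, $y=b$, $u=c$, $v=d$, then gives $S(A)T=ad^t\otimes cb^t=S(A^{t_2})$.

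The only real obstacle is the bookkeeping behind the basic identity $(xy^t\otimes uv^t)T=xv^t\otimes uy^t$. An equivalent route, which avoids even that small computation, is to work in indices: writing entries of $A\in M_k\otimes M_k$ as $A[i_1i_2;j_1j_2]$, Definition~\ref{defS} translates into the index permutation $S(A)[i_1i_2;j_1j_2]=A[i_1j_1;i_2j_2]$, right multiplication by $T$ swaps $j_1\leftrightarrow j_2$, and $(\cdot)^{t_2}$ swaps $i_2\leftrightarrow j_2$; both formulas then reduce to a one-line check that two compositions of these index permutations agree.
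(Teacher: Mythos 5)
Your proof is correct and follows essentially the same route as the paper: both reduce by linearity to a rank-one elementary tensor $A$ and verify the two identities by direct computation with the Kronecker identity $(v\otimes w)(r\otimes s)^t=vr^t\otimes ws^t$, the only cosmetic difference being that you evaluate $S$ via Lemma~\ref{prop1S} (and note an equivalent index-permutation bookkeeping) where the paper applies Definition~\ref{defS} directly. No gaps.
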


\begin{proof}
Since the multiplication by $T$, the partial transposition and $S$ are linear transformations acting on $M_{k}\otimes M_k$, we only need to prove these formulas for a set of generators of $M_{k}\otimes M_k$. Thus, let us prove this formula only for matrices of the type $A=ab^t\otimes cd^t$, where $a,b,c,d\in\mathbb{C}^k$. 

Notice that 
$A=ab^t\otimes cd^t=(a\otimes c)(b\otimes d)^t$, $AT=(a\otimes c)(b\otimes d)^tT=(a\otimes c)(d\otimes b)^t=ad^t\otimes cb^t$ and  $S(AT)T=S(ad^t\otimes cb^t)T=(F(ad^t)F(cb^t)^t)T=((a\otimes d)(c\otimes b)^t)T=(a\otimes d)(b\otimes c)^t=ab^t\otimes dc^t.$
Therefore, $A^{t_2}=ab^t\otimes dc^t=S(AT)T$.

For the other formula,   $S(A^{t_2})=S(ab^t\otimes dc^t)=(a\otimes b)(d\otimes c)^t$
and $$S(A)T=S(ab^t\otimes cd^t)T=(a\otimes b)(c\otimes d)^tT=(a\otimes b)(d\otimes c)^t.$$
\end{proof}

\begin{remark}
These formulas can be rewritten using the $*-$product $($defined in \cite{cariello}$)$ as $ A* T=S(AT)T$ and $S(A*T)=S(A)T$, because $A^{t_2}=A* T$. They show a very interesting connection between the flip operator, the partial transposition, the usual matricial product and the $*-$product.  \end{remark}

\begin{theorem}\label{theoremGuhne}$($T\'oth and G\"uhne's theorem \cite{guhne}$)$. Let $A\in M_k\otimes M_k$ be a positive semidefinite Hermitian matrix and suppose $AT=TA=A$. If $A$ is SPC then $A$ is PPT. $($Remind that $T$ is the flip operator$)$.
\end{theorem}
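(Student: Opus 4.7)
The plan is to combine the two identities of \lemref{prop2S} with the SPC hypothesis and reduce $A^{t_2}$ to a manifestly positive expression. First, since $AT = A$, the identity $A^{t_2} = S(AT)T$ collapses to $A^{t_2} = S(A)T$. The second identity in \lemref{prop2S} gives $S(A^{t_2}) = S(A)T$ directly. Comparing the two equalities shows that $A^{t_2}$ is a fixed point of $S$:
$$A^{t_2} = S(A^{t_2}).$$
Hence it is enough to prove that $S(A^{t_2})$ is positive semidefinite.

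To do that, I would expand $S(A^{t_2})$ using the SPC decomposition. Writing $A = \sum_{i=1}^n \lambda_i \gamma_i\otimes\gamma_i$ with $\lambda_i>0$ and each $\gamma_i$ Hermitian, partial transposition on the second factor yields $A^{t_2} = \sum_{i=1}^n \lambda_i \gamma_i\otimes\gamma_i^t$. Hermiticity of $\gamma_i$ gives $\gamma_i^t = \overline{\gamma_i}$ and hence, straight from the definition of $F$, $F(\gamma_i^t) = \overline{F(\gamma_i)}$. By the definition of $S$,
$$S(A^{t_2}) \;=\; \sum_{i=1}^n \lambda_i\, F(\gamma_i)\,F(\gamma_i^t)^t \;=\; \sum_{i=1}^n \lambda_i\, F(\gamma_i)\,\overline{F(\gamma_i)}^{\,t},$$
which is a non-negative linear combination of rank-one positive semidefinite matrices $v_i v_i^*$ with $v_i=F(\gamma_i)$, and is therefore itself positive semidefinite.

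Putting the two steps together, $A^{t_2} = S(A^{t_2})$ is positive semidefinite, so $A$ is PPT. I do not foresee any real obstacle: once one observes that the symmetry $AT=A$ forces $A^{t_2}$ to be $S$-invariant, and that the SPC form together with Hermiticity of the $\gamma_i$ converts $S(A^{t_2})$ into a sum of positive rank-one terms, the conclusion is immediate from \lemref{prop1S} and \lemref{prop2S}. The whole argument is essentially two one-line applications of the identities in \lemref{prop2S}.
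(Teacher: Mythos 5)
Your argument is correct and follows the paper's own proof essentially verbatim: both use the two identities of \lemref{prop2S} together with $AT=A$ to conclude $A^{t_2}=S(A^{t_2})$, and then expand $S(A^{t_2})$ via the SPC decomposition into $\sum_i \lambda_i v_i\overline{v_i}^{\,t}$ with $v_i=F(\gamma_i)$, which is manifestly positive semidefinite. No gaps.
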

\begin{proof}

If $A$ is SPC then $A$ has a Hermitian Schmidt decomposition $\sum_{i=1}^n\lambda_i\gamma_i\otimes\gamma_i$, where $\lambda_i>0$.
Now, $A^{t_2}=S(AT)T$, by lemma \ref{prop2S}.
By hypothesis $A^{t_2}=S(A)T$ and, by lemma \ref{prop2S}, we get $S(A)T=S(A^{t_2})$.

Let $v_i=F(\gamma_i)$. Therefore, $A^{t_2}=S(A^{t_2})=S(\sum_{i=1}^n\lambda_i\gamma_i\otimes\gamma_i^t)=\sum_{i=1}^n\lambda_iv_i\overline{v_i}^t$.
%Notice that $\sum_{i=1}^n\lambda_iv_i\overline{v_i}^t$ is a spectral decomposition of $S(A^{t_2})=A^{t_2}$, by remark \ref{spectral}.
Therefore $A^{t_2}$ is positive semidefinite and $A$ is PPT.

\end{proof}
\section{Tensor Rank 3}

Recently, in \cite{cariello}, the author proved that every positive semidefinite matrix with tensor rank 2, in $M_k\otimes M_m$, is separable. The same result is not true for matrices with tensor rank 3, we provide a counterexample in section 5.

Although, in this section, we prove that every positive semidefinite Hermitian matrix  with tensor rank 3, in $M_2\otimes M_m$, is separable $($theorem \ref{tensorrank3}$)$. This theorem is a consequence of the theorem \ref{leftinvariant} obtained in \cite{kraus}. We shall use this theorem to prove that every SPC matrix in $M_2\otimes M_2$ is PPT and therefore it is separable. Thus, symmetry implies separability in $M_2\otimes M_2$.

\begin{theorem}\label{leftinvariant}
Let $A\in M_2\otimes M_m$ be a positive semidefinite Hermitian matrix. If $A$ is invariant by the left partial transposition $((\cdot)^t\otimes Id(A)=A)$ then $A$ is separable.
\end{theorem}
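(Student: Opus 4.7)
The plan is to realize $A=VV^*$ for a rectangular $V$, rewrite the symmetry hypothesis as a simple constraint on the two blocks of $V$, and then diagonalize to read off an explicitly separable decomposition.

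First I would set $n=\mathrm{rank}(A)$, write $A=VV^*$ with $V$ of size $2m\times n$, and split $V=\bigl(\begin{smallmatrix}V_1\\V_2\end{smallmatrix}\bigr)$ into two $m\times n$ blocks. In block form,
$$A=\begin{pmatrix}V_1V_1^*&V_1V_2^*\\V_2V_1^*&V_2V_2^*\end{pmatrix},$$
and since partial transposition on the first tensor factor swaps the two off-diagonal blocks, the hypothesis $(\cdot)^t\otimes Id(A)=A$ translates into $V_1V_2^*=V_2V_1^*$; equivalently, $V_1V_2^*$ is Hermitian.

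Assume first that $V_2$ has full column rank. If $V_2^*y=0$, then $V_2V_1^*y=V_1V_2^*y=0$, and injectivity of $V_2$ forces $V_1^*y=0$; hence $\mathrm{range}(V_1)\subseteq\mathrm{range}(V_2)$, so $V_1=V_2K$ for a unique $K\in M_n$. Hermiticity of $V_2KV_2^*=V_1V_2^*$ together with a left inverse for $V_2$ yields $K=K^*$. Spectrally decompose $K=\sum_j\mu_je_je_j^*$ in an orthonormal basis $\{e_j\}$ of $\mathbb{C}^n$ and set $f_j=V_2e_j\in\mathbb{C}^m$. Then $V_1e_j=V_2Ke_j=\mu_jf_j$, so
$$Ve_j=\begin{pmatrix}\mu_jf_j\\ f_j\end{pmatrix}=\begin{pmatrix}\mu_j\\ 1\end{pmatrix}\otimes f_j,$$
a product vector with real first factor, and
$$A=VV^*=\sum_j (Ve_j)(Ve_j)^*=\sum_j\begin{pmatrix}\mu_j\\ 1\end{pmatrix}\begin{pmatrix}\mu_j&1\end{pmatrix}\otimes f_jf_j^*,$$
a sum of tensor products of positive semidefinite matrices. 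Thus $A$ is separable.

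To remove the full-column-rank assumption, let $r=\mathrm{rank}(V_2)$ and choose an orthonormal basis of $\mathbb{C}^n$ whose last $n-r$ vectors span $\ker V_2$. In this basis $V$ takes the form $\bigl(\begin{smallmatrix}V_1'&V_1''\\ V_2'&0\end{smallmatrix}\bigr)$ with $V_2'$ of full column rank $r$; the last $n-r$ columns of $V$ contribute the trivially separable piece $E_{11}\otimes V_1''(V_1'')^*$, and the symmetry condition restricts to $V_1'(V_2')^*=V_2'(V_1')^*$, so the preceding paragraph applied to $(V_1',V_2')$ finishes the argument. The only step that is not routine is guessing the right decomposition; once $V_1=V_2K$ with $K$ Hermitian is in hand, the rest is straightforward linear algebra.
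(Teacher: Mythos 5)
Your argument is correct. The one thing to note is that the paper does not actually prove this theorem: its ``proof'' is the single line ``See \cite{kraus}'', deferring entirely to Kraus--Cirac--Karnas--Lewenstein. So there is no in-paper argument to compare against; what you have supplied is a self-contained proof of the cited result, and it is essentially the standard one from that reference, recast in clean matrix language. All the steps check out: the identification of $((\cdot)^t\otimes Id)(A)=A$ with the swap of the off-diagonal $m\times m$ blocks, hence with Hermiticity of $V_1V_2^*$; the deduction $\ker V_2^*\subseteq\ker V_1^*$, hence $\mathrm{range}(V_1)\subseteq\mathrm{range}(V_2)$ and $V_1=V_2K$ with $K=K^*$ when $V_2$ is injective; the spectral decomposition of $K$ producing product vectors $\bigl(\begin{smallmatrix}\mu_j\\1\end{smallmatrix}\bigr)\otimes f_j$ with real $\mu_j$, so that $A$ is a sum of tensor products of rank-one positive semidefinite factors; and the rank-deficient case, where the columns in $\ker V_2$ split off the separable piece $e_1e_1^t\otimes V_1''(V_1'')^*$ and the symmetry condition descends to the pair $(V_1',V_2')$. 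The only cosmetic remark is that the full-rank reduction is where the restriction to $M_2$ enters (a $\mathbb{C}^2$ vector with one coordinate normalized to $1$ and the other forced real); it is worth flagging that this is exactly the step that fails in higher local dimension, which is consistent with the paper's later counterexample in $M_3\otimes M_3$.
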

\begin{proof}
See \cite{kraus}.
\end{proof}

\begin{theorem}\label{tensorrank3}
Let $A\in M_2\otimes M_m$ be a positive semidefinite Hermitian matrix. If $A$ has tensor rank smaller or equal to 3 then $A$ is separable.
\end{theorem}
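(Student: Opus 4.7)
The plan is to apply theorem \ref{leftinvariant} after bringing $A$, by invertible local operations on the first tensor factor (which preserve both positive semidefiniteness and separability), into a form $A''$ that satisfies $(\cdot)^t\otimes Id(A'')=A''$. If the tensor rank of $A$ is at most $2$ we are done by the tensor rank two result of \cite{cariello}, so I would assume the tensor rank is exactly $3$ and write $A=\sum_{i=1}^3\gamma_i\otimes\delta_i$ in Hermitian Schmidt form. Let $\gamma_0$ denote a nonzero Hermitian element of $M_2$ orthogonal to $\gamma_1,\gamma_2,\gamma_3$; the whole strategy hinges on the spectrum of $\gamma_0$.

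The main obstacle, and the step where the hypotheses really interact, is to show that $\gamma_0$ must have one strictly positive and one strictly negative eigenvalue. I would argue by contradiction: diagonalizing $\gamma_0$ by a local unitary, the assumption $\gamma_0\succeq 0$ forces a linear relation between the diagonal entries of each $\gamma_i$; then, writing $A$ in the natural $2\times 2$ block form with blocks in $M_m$, the condition $A\succeq 0$ compels either $A=0$ (when $\gamma_0$ has full rank, since the two diagonal blocks of $A$ end up proportional with opposite signs and PSD, hence both zero, and the off-diagonal blocks of a PSD matrix with zero diagonal blocks vanish) or the $\gamma_i$ to all lie in a one dimensional subspace (when $\gamma_0$ has rank one), both of which contradict the tensor rank being exactly $3$. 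The case $\gamma_0\preceq 0$ is symmetric, so $\gamma_0$ must have mixed signature.

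With $\gamma_0$ of mixed signature I would produce a positive definite $P\in M_2$ with $tr(P\gamma_0)=0$ by swapping the absolute values of the two eigenvalues of $\gamma_0$ in its own spectral basis. Taking $F=P^{-1/2}$ and $A'=(F\otimes I)A(F\otimes I)^*$, the Hermitian element of $M_2$ orthogonal to the new Schmidt operators $F\gamma_iF^*$ is, up to scalar, $P^{1/2}\gamma_0P^{1/2}$, whose trace equals $tr(P\gamma_0)=0$. Hence $I$ lies in the span of these new operators, so that span decomposes as $\mathbb{R}I\oplus V$ for some two dimensional subspace $V$ of the traceless Hermitian matrices in $M_2$. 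Since $SU(2)$ conjugation acts as $SO(3)$ on the traceless Hermitian part of $M_2$, a suitable $U\in SU(2)$ rotates $V$ into the real symmetric Hermitian matrices. Then $A''=(U\otimes I)A'(U\otimes I)^*$ has all its first factor Schmidt operators real symmetric, i.e.\ $(\cdot)^t\otimes Id(A'')=A''$, so theorem \ref{leftinvariant} gives separability of $A''$, and pulling back through the invertible local operation $(UF)\otimes I$ yields separability of $A$.
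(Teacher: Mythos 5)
Your proof is correct, and while it ends at the same place as the paper's --- reducing to theorem \ref{leftinvariant} by an invertible local operation $X\otimes Id$ that makes all first-factor Schmidt operators real symmetric --- the route there is genuinely different. The paper first perturbs $A$ to $A(\epsilon)=(A_1+\epsilon Id)\otimes B_1+\cdots$ so that one Schmidt operator becomes invertible, then performs a chain of explicit normalizations (conjugate to make the first operator $Id$, unitarily diagonalize the second, absorb the diagonal of the third into the first two, and fix the phase of its off-diagonal entry), and finally recovers $A$ as a limit using closedness of the separable cone. You instead work with the Hermitian matrix $\gamma_0$ spanning the orthogonal complement of the first-factor Schmidt operators, show that positivity together with tensor rank exactly $3$ forces $\gamma_0$ to have mixed signature (your two-case block argument is sound; in the rank-one case the vanishing of the $(1,1)$ block of a PSD matrix kills the off-diagonal blocks and collapses $A$ to tensor rank $\leq 1$), use a positive definite filter to make $\gamma_0$ traceless so that $Id$ enters the span, and then invoke the $SU(2)\to SO(3)$ adjoint action to rotate the traceless part of that span into the real symmetric matrices. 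What your version buys is the elimination of the $\epsilon$-perturbation and the limiting argument, plus a clear isolation of where positivity and the rank bound actually enter (namely, in forcing the mixed signature of $\gamma_0$); what the paper's version buys is an entirely elementary, explicitly computable sequence of conjugations with no appeal to the structure of the adjoint representation. One small point worth making explicit if you write this up: in the rank-one case for $\gamma_0$ you need the linear independence of the second-factor operators $\delta_i$ (guaranteed by the Schmidt form) only if you insist on concluding that the $\gamma_i$ themselves are collinear; it is quicker to note directly that $A=e_2e_2^t\otimes A_{22}$ has tensor rank at most $1$.
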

\begin{proof}  If $A$ has tensor rank smaller or equal to 2 then by theorem 4.7 in \cite{cariello}, $A$ is separable. Let us suppose that $A$ has tensor rank 3.

 It is possible to find a  Hermitian decomposition, $A=\sum_{i=1}^3A_i\otimes B_i$, such that $A_1$ and $B_1$ are positive semidefinite.
Define $A(\epsilon)= A_1'\otimes B_1 +\sum_{i=2}^3A_i\otimes B_i$, such that $A_1'=A_1+\epsilon Id$, $\epsilon>0$. Notice that $A_1'$ is positive definite and let $A_1'=RR^*$ where $R$ is an invertible matrix. 
Now let us prove that $A(\epsilon)$ is separable for every $\epsilon>0$
If for some $\epsilon>0$, $A(\epsilon)$ has tensor rank 1 then $A(\epsilon)$ is separable. If $A(\epsilon)$ has tensor rank 2 then $A(\epsilon)$ is also separable by theorem 4.7 in \cite{cariello}. Let us suppose that $A(\epsilon)$ has tensor rank 3.

Let $C=(R^{-1}\otimes Id)A(\epsilon)((R^{-1})^*\otimes Id)=Id\otimes B_1+\sum_{i=2}^3A_i'\otimes B_i$, where $A_i'=R^{-1}A_i(R^{-1})^*$,  for $i=2,3$. 

Since $A_2'$ is a Hermitian matrix, there is an unitary matrix $U$ and real diagonal matrix $D$ such that $A_2'=UDU^*$. Notice that $D\neq \lambda Id$, otherwise $A_2'=\lambda Id$ and $A_2=\lambda A_1'$, which is not possible since $A(\epsilon)$ has tensor rank 3.

Let $E=(U^*\otimes Id)C(U\otimes Id)=Id\otimes B_1+D\otimes B_2+A_3''\otimes B_3$, where $A_3''=U^*A_3'U$.

Since $D\neq \lambda Id$, any diagonal matrix in $M_2$ can be written as linear combination of $D$ and $Id$. Let $D'$ be the  diagonal of $A_3''$. Notice that $D'$ is a real diagonal matrix, since $A_3''$ is Hermitian. Write $D'=aId+cD$, where $a,c$ are real numbers.

Thus, $E=Id\otimes (B_1+a B_3)+D\otimes (B_2+c B_3)+ A_3'''\otimes B_3$, where $A_3'''=A_3''-D' =\left(\begin{array}{cc}
0 & \overline{b}\\
b & 0
\end{array}\right)$.

Notice that $b\neq 0$, otherwise $E$ would have tensor rank 2 and $A(\epsilon)$ would have tensor rank 2.
 
Let $V=\left(\begin{array}{cc}
1& 0\\
0 & \overline{b}
\end{array}\right)$ and consider $F=(V\otimes Id)E(V^*\otimes Id)$. Notice that $VV^*$ and $VDV^*$ are diagonal matrices and $VA_3''' V^*= \left(\begin{array}{cc}
0 & \overline{b}b\\
\overline{b}b & 0
\end{array}\right)$ is symmetric too. Thus, $F$ is positive semidefinite Hermitian matrix in $M_2\otimes M_m$ invariant by the left partial transposition. Therefore, by theorem \ref{leftinvariant}, $F$ is separable. 

Therefore $A(\epsilon)$ is separable for every $\epsilon>0$, because $A(\epsilon)=(RUV^{-1}\otimes Id)F((V^{-1})^*U^*R^*\otimes Id)$.

Since the set of separable matrices is closed then $\displaystyle\lim_{\epsilon\rightarrow 0+}A(\epsilon)=A$ is separable. 
 
\end{proof}
\begin{remark} Notice that the maximum tensor rank in $M_2\otimes M_m$ is 4. Thus,  in order to solve the separability problem in  $M_2\otimes M_m$, we only need to deal with matrices with tensor rank 4.
\end{remark}

\section{SPC is PPT in $M_2\otimes M_2$}

In this section, we prove that every SPC matrix is PPT in $M_2\otimes M_2$. The proof of this result relies on theorem \ref{inequality}. However, in $M_3\otimes M_3$, there exists a SPC matrix which is not PPT and we shall present this counterexample in the next section. 
In the last section, we provide a non trivial example of a family of SPC matrices in $M_{k}\otimes M_k$ $(k\in \mathbb{N})$ that are also PPT.

\begin{lemma}\label{shape}
Let $A\in M_2\otimes M_2$ be a SPC matrix. If $A$ has tensor rank 4 then $A$ can be written as $A=\lambda Id\otimes Id+ D\otimes D+\gamma\otimes\gamma+\delta\otimes\delta$, where $D$ is a real diagonal matrix, $\gamma, \delta$ are Hermitian matrices and $\lambda$ a positive real number.
\end{lemma}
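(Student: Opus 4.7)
My plan is to start from the Hermitian Schmidt decomposition $A=\sum_{i=1}^{4}\lambda_i\gamma_i\otimes\gamma_i$ supplied by the SPC hypothesis, with $\lambda_i>0$ and the $\gamma_i\in M_2$ orthonormal Hermitian (the sum has exactly four terms because the number of nonzero coefficients in the Hermitian Schmidt decomposition coincides with the tensor rank). Absorbing positive scalars via $\beta_i:=\sqrt{\lambda_i}\,\gamma_i$, I get $A=\sum_{i=1}^{4}\beta_i\otimes\beta_i$ with the $\beta_i$ Hermitian and linearly independent. Writing $H_2$ for the real vector space of Hermitian $2\times 2$ matrices, $\dim_{\mathbb{R}}H_2=4$, so the four $\beta_i$ form an $\mathbb{R}$-basis of $H_2$.

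The key observation is that for any real orthogonal $O\in O(4)$ the recombinations $\beta'_i:=\sum_{j}O_{ij}\beta_j$ remain Hermitian and satisfy
$$\sum_{i=1}^{4}\beta'_i\otimes\beta'_i\;=\;\sum_{j,k}(O^{t}O)_{jk}\,\beta_j\otimes\beta_k\;=\;\sum_{j=1}^{4}\beta_j\otimes\beta_j\;=\;A.$$
Thus I may rotate $(\beta_1,\beta_2,\beta_3,\beta_4)$ by any element of $O(4)$ without changing $A$ or the symmetric form, and I will spend this freedom in two stages.

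Stage one: since $\{\beta_i\}$ is a basis of $H_2$, I expand $Id=\sum_{i}a_i\beta_i$ with $\vec{a}=(a_1,\ldots,a_4)\ne 0$, and take any $O\in O(4)$ whose first row is $\vec{a}/\|\vec{a}\|$. Then $\beta'_1=(1/\|\vec{a}\|)\,Id$, which contributes the term $\lambda\,Id\otimes Id$ with $\lambda:=1/\|\vec{a}\|^{2}>0$.

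Stage two: I still have a residual $O(3)$ acting on rows $2,3,4$ of $O$. Set $W:=\operatorname{span}_{\mathbb{R}}(\beta'_2,\beta'_3,\beta'_4)\subset H_2$; this is a $3$-dimensional subspace, and the subspace $\operatorname{span}_{\mathbb{R}}(Id,\sigma_z)\subset H_2$ of real diagonal Hermitian matrices is $2$-dimensional, so a dimension count in the $4$-dimensional $H_2$ gives
$$\dim_{\mathbb{R}}\!\bigl(W\cap\operatorname{span}_{\mathbb{R}}(Id,\sigma_z)\bigr)\;\ge\;3+2-4\;=\;1.$$
Pick a nonzero $D$ in this intersection (automatically real diagonal), write $D=p_2\beta'_2+p_3\beta'_3+p_4\beta'_4$, and rescale $D$ so that $(p_2,p_3,p_4)$ is a unit vector of $\mathbb{R}^3$. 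Extending $(p_2,p_3,p_4)$ to an orthonormal basis of $\mathbb{R}^3$ and applying the corresponding $O(3)$-rotation to rows $2,3,4$ yields $\beta''_2=D$ with $\gamma:=\beta''_3$ and $\delta:=\beta''_4$ still Hermitian, so the invariance observation produces the required form
$$A\;=\;\lambda\,Id\otimes Id+D\otimes D+\gamma\otimes\gamma+\delta\otimes\delta.$$
The only delicate point is the very first step, namely identifying the number of summands of the Hermitian Schmidt decomposition with the tensor rank so that the $\beta_i$ really do form a basis of $H_2$; after that, everything is two rounds of elementary real linear algebra inside $H_2$.
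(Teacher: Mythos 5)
Your proof is correct, but it takes a genuinely different route from the paper's. The paper passes to the image side of the isomorphism $S$: it views $S(A^{t_2})=\sum_{i=1}^4\lambda_i v_i\overline{v_i}^t$ as a positive definite matrix and peels off rank-one projections, first $\lambda uu^t$ with $u=F(Id)$ and then $\mu dd^t$ with $d$ a diagonal Hermitian vector found in $\Im(B)$, invoking Lemma \ref{hermitianeigenvector} at each stage to keep Hermitian eigenvectors available and checking separately that $u\notin\Im(B)$ and that $d$ is not a multiple of $u$ so the ranks actually drop; positivity of the residue is what forces the coefficients $\lambda,\mu,a,b$ to be positive. You instead stay on the original side, absorb the $\lambda_i$ into $\beta_i=\sqrt{\lambda_i}\gamma_i$, and exploit the $O(4)$ gauge freedom of the symmetric form $\sum_i\beta_i\otimes\beta_i$: one rotation aligns $\beta_1'$ with $Id$, and a residual $O(3)$ rotation, after the dimension count $3+2-4\ge 1$ locates a real diagonal matrix in $\mathrm{span}(\beta_2',\beta_3',\beta_4')$, aligns $\beta_2''$ with it. The two arguments share the essential dimension count (yours in $H_2$, the paper's in the space of Hermitian vectors, which correspond under $F$), and both hinge on tensor rank $4$ forcing the four Schmidt factors to be a basis; but your orthogonal-recombination mechanism makes the correct signs automatic and avoids the rank-peeling bookkeeping, so it is the more elementary and self-contained of the two. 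The one point you rightly flag --- that the number of nonzero Hermitian Schmidt coefficients equals the tensor rank, so that the $\beta_i$ span $H_2$ --- is exactly the standard fact the paper also relies on, and it holds because both factor families in a Schmidt decomposition are linearly independent.
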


\begin{proof}
Since $A$ is a SPC matrix, let $A=\sum_{i=1}^4\lambda_i\gamma_i\otimes \gamma_i$ be a Hermitian Schmidt decomposition of $A$ with $\lambda_i>0$. 
Now $A^{t_2}=\sum_{i=1}^4\lambda_i\gamma_i\otimes \gamma_i^t=\sum_{i=1}^4\lambda_i\gamma_i\otimes \overline{\gamma_i}.$ Let $S$ be the linear transformation defined in \ref{defS} and let $v_i=F(\gamma_i)$.
Notice that $S(A^{t_2})=\sum_{i=1}^4\lambda_i v_i\overline{v_i}^t$ is a spectral decomposition of $S(A^{t_2})$. Since $\lambda_i>0$, $S(A^{t_2})$ is a positive definite Hermitian matrix in $M_2\otimes M_2$.

Thus, the vector $u=\sum_{i=1}^2e_i\otimes e_i\in\Im(S(A^{t_2}))$, where $\{e_1,e_2\}$ is the canonical basis of $\mathbb{C}^2$. Therefore exists a positive real number $\lambda$ such that $B=S(A^{t_2})-\lambda uu^t$ is a positive semidefinite Hermitian matrix of rank 3. 
Notice that $u\notin\Im(B)$, otherwise would exist a $\epsilon>0$ such that $B-\epsilon uu^t$ is a positive semidefinite Hermitian matrix of rank 2 and $S(A^{t_2})= \lambda uu^t+B= (\lambda +\epsilon) uu^t+ B-\epsilon uu^t$. Therefore $S(A^{t_2})$ would have rank 3.  A contradiction.
Next, since $B=S(A^{t_2})-\lambda uu^t=\sum_{i=1}^4\lambda_i v_i\overline{v_i}^t-\lambda uu^t$ then $\Im(B)$ has a basis with Hermitian eigenvectors of $B$ by lemma \ref{hermitianeigenvector}. Let $W$ be the real span of this basis. Notice that the $\dim(W)=3$.
Let $H$ be the real vector space of the Hermitian vectors of $\mathbb{C}^2\otimes\mathbb{C}^2$. Let $V$ be the real span of  $\{e_1\otimes e_1,e_2\otimes e_2\}$. Notice that $V+W\subset H$ and the $\dim(H)=4$, $\dim(V)=2$ and $\dim(W)=3$. Therefore exists a vector $d=d_1e_1\otimes e_1+ d_2e_2\otimes e_2\in V\cap W$. Remind that, by definition of $V$, $d_1,d_2\in\mathbb{R}$. Since $W\subset\Im(B)$, then $d=d_1e_1\otimes e_1+ d_2e_2\otimes e_2\in\Im(B)$, with $d_1,d_2\in\mathbb{R}$.

This vector $d$ is not a multiple of $u$, because $d\in\Im(B)$ and $u$ does not.
Again we can find $\mu>0$, such that $B-\mu dd^t$  is a positive semidefinite Hermitian matrix of rank 2 and satisfies the conditions of lemma \ref{hermitianeigenvector}.

Thus,  we can write $B-\mu dd^t=ar\overline{r}^t+bs\overline{s}^t$, where $a,b$ are positive real numbers and $r,s$ are Hermitian vectors, by lemma \ref{hermitianeigenvector}. Thus,  $S(A^{t_2})=\lambda uu^t+\mu dd^t+ar\overline{r}^t+bs\overline{s}^t$. 

By lemma \ref{prop1S}, $A^{t_2}=S^2(A^{t_2})=S(\lambda uu^t+\mu dd^t+ar\overline{r}^t+bs\overline{s}^t)=$ $$\lambda F^{-1}(u)\otimes F^{-1}(u)+\mu F^{-1}(d)\otimes F^{-1}(d)+aF^{-1}(r)\otimes F^{-1}(\overline{r})+bF^{-1}(s)\otimes F^{-1}(\overline{s})=$$ $$\lambda Id\otimes Id+(\sqrt{\mu} D)\otimes (\sqrt{\mu}D)+(\sqrt{a}F^{-1}(r))\otimes (\overline{\sqrt{a}F^{-1}(r)})+(\sqrt{b}F^{-1}(s))\otimes (\overline{\sqrt{b}F^{-1}(s)}).$$
 
Finally $A=$ $$\lambda Id\otimes Id+(\sqrt{\mu} D)\otimes (\sqrt{\mu}D)+(\sqrt{a}F^{-1}(r))\otimes (\sqrt{a}F^{-1}(r))+(\sqrt{b}F^{-1}(s))\otimes (\sqrt{b}F^{-1}(s)).$$

\end{proof}

\begin{theorem}\label{inequality} Let $A\in M_2\otimes M_2$ be a SPC matrix. If $\sum_{i} A_i\otimes A_i$ is a Hermitian decomposition of $A$ then $\sum_{i} A_i\circ A_i^t$ is a positive semidefinite Hermitian matrix in $M_2$. $($Remind that $\circ$ denotes the Schur Product$)$
\end{theorem}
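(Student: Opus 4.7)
The plan is to reduce the claim to a single scalar inequality and then prove that inequality by combining the positivity of $A$, tested against the antisymmetric (singlet) vector in $\mathbb{C}^2\otimes\mathbb{C}^2$, with the Cauchy--Schwarz inequality.

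First, since each $A_i$ is Hermitian in $M_2$, its diagonal entries are real and $(A_i)_{21}=\overline{(A_i)_{12}}$. The $(k,l)$-entry of $A_i\circ A_i^t$ is $(A_i)_{kl}(A_i)_{lk}=|(A_i)_{kl}|^{2}$, so
\[
M \;:=\; \sum_i A_i\circ A_i^t \;=\; \begin{pmatrix} \sum_i (A_i)_{11}^{2} & \sum_i |(A_i)_{12}|^{2} \\ \sum_i |(A_i)_{12}|^{2} & \sum_i (A_i)_{22}^{2} \end{pmatrix} \;=:\; \begin{pmatrix} a & b \\ b & d \end{pmatrix}.
\]
Both $a$ and $d$ are automatically nonnegative, so $M\succeq 0$ reduces to verifying the single inequality $ad\ge b^{2}$.

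Next, I would test the positivity of $A$ (which holds because $A$ is SPC) on $\psi = e_1\otimes e_2 - e_2\otimes e_1$. A direct expansion of $\psi^{*}A\psi$ uses $(e_a^{*}\otimes e_b^{*})(A_i e_c\otimes A_i e_d)=(A_i)_{ac}(A_i)_{bd}$ and the Hermiticity of each $A_i$ to collapse $(A_i)_{12}(A_i)_{21}$ into $|(A_i)_{12}|^{2}$, giving
\[
0 \;\le\; \psi^{*}A\psi \;=\; 2\sum_i \bigl((A_i)_{11}(A_i)_{22}-|(A_i)_{12}|^{2}\bigr).
\]
Hence $\sum_i (A_i)_{11}(A_i)_{22}\ge b$. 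This is the only step that invokes the SPC hypothesis; by Corollary \ref{defSPC2}, SPC is precisely the condition that allows a positive semidefinite matrix to be written in the assumed form $\sum_i A_i\otimes A_i$ with Hermitian summands.

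Finally, Cauchy--Schwarz applied to the real sequences $((A_i)_{11})_i$ and $((A_i)_{22})_i$ yields $\sum_i (A_i)_{11}(A_i)_{22}\le\sqrt{ad}$. Chaining the two estimates gives $b\le\sqrt{ad}$, and squaring (both sides being nonnegative) delivers $ad\ge b^{2}$. I do not anticipate any real obstacle beyond the bookkeeping in the singlet computation, where one must carefully keep track of the four cross terms and use Hermiticity at exactly one point. Notably, neither Lemma \ref{shape} nor the explicit rank-$4$ structure of SPC matrices in $M_2\otimes M_2$ is needed for this argument; the proof relies only on $A\succeq 0$ and on Hermiticity of the summands.
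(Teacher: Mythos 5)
Your proof is correct, and it takes a genuinely different and considerably more elementary route than the paper's. The paper splits into cases by tensor rank: rank $\le 3$ is dispatched by Theorem \ref{tensorrank3} (which ultimately rests on the Kraus--Cirac--Karnas--Lewenstein theorem on states invariant under left partial transposition), while rank $4$ uses the normal form of Lemma \ref{shape}, $A=\lambda Id\otimes Id+D\otimes D+\gamma\otimes\gamma+\delta\otimes\delta$, adds a positive diagonal correction so that the sum has tensor rank $3$, invokes separability again, and reads off the principal submatrix; Cauchy--Schwarz enters there to prove positivity of the correction term. You instead note that $\sum_i A_i\circ A_i^t$ is a real symmetric $2\times 2$ matrix with nonnegative diagonal and nonnegative off-diagonal entry $b=\sum_i|(A_i)_{12}|^2$, so everything reduces to the single minor inequality $ad\ge b^2$; the singlet test $\psi^*A\psi\ge 0$ with $\psi=e_1\otimes e_2-e_2\otimes e_1$ gives $b\le\sum_i(A_i)_{11}(A_i)_{22}$, and Cauchy--Schwarz on the real sequences of diagonal entries gives $\sum_i(A_i)_{11}(A_i)_{22}\le\sqrt{ad}$. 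All steps check out: the entrywise identification of $A_i\circ A_i^t$, the singlet expansion, and the chaining $0\le b\le\sqrt{ad}$ are each correct, and the only hypotheses used are $A\succeq 0$ and Hermiticity of the $A_i$, which the SPC assumption supplies. Reassuringly, your argument degrades exactly as it should for $k\ge 3$: it only controls $2\times 2$ principal minors of $\sum_i A_i\circ A_i^t$, which no longer suffices for positivity, consistent with the counterexample of Proposition \ref{example}. What the paper's longer route buys is the intermediate material (Lemma \ref{shape} and Theorem \ref{tensorrank3}) which is of independent interest and reused elsewhere; what your route buys is a short, self-contained proof of Theorem \ref{inequality} requiring neither separability results nor the rank-$4$ normal form, and hence a much more direct path to Theorem \ref{SPC}.
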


\begin{proof}
Let $A$ be a SPC matrix. If the tensor rank of $A$ is smaller or equal to 3 then, by theorem \ref{tensorrank3}, $A$ is separable and therefore PPT. Thus, $\sum_{i} A_i\circ A_i^t$ is positive semidefinite, since it is a principal submatrix of the positive matrix $\sum_{i} A_i\otimes A_i^t$.

Let us suppose  that $A$ has tensor rank 4. By lemma \ref{shape}, we can write $A=\lambda Id\otimes Id+ D\otimes D+\gamma\otimes\gamma+\delta\otimes\delta$, where $D$ is a real diagonal matrix, $\gamma, \delta$ are Hermitian matrices and $\lambda$ is a positive real number.
 
 Let  $d_1,d_2$ be the real numbers in the diagonal of D and $\{e_1,e_2\}$ be the canonical basis of $\mathbb{C}^2$. Notice that $$B=\sqrt{\lambda Id+D^2} \otimes \sqrt{\lambda Id+D^2}-\lambda Id\otimes Id- D\otimes D$$ is a positive semidefinite diagonal matrix, because $$B(e_i\otimes e_j)=(\sqrt{\lambda+d_i^2}\sqrt{\lambda+d_j^2}-\lambda-d_id_j)(e_i\otimes e_j)$$
and $\sqrt{\lambda+d_i^2}\sqrt{\lambda+d_j^2}-\sqrt{\lambda}\sqrt{\lambda}-d_id_j\geq0$, by Cauchy-Schwarz inequality.

Therefore $B+A=\sqrt{\lambda Id+D^2} \otimes \sqrt{\lambda Id+D^2}+\gamma\otimes\gamma+\delta\otimes\delta$ is positive with tensor rank 3 in $M_2\otimes M_2$. By theorem \ref{tensorrank3}, this matrix is separable and PPT.

Thus, $\sqrt{\lambda Id+D^2} \otimes \sqrt{\lambda Id+D^2}+\gamma\otimes\gamma^t+\delta\otimes\delta^t$ is positive  and its principal submatrix $\sqrt{\lambda Id+D^2} \circ \sqrt{\lambda Id+D^2}+\gamma\circ\gamma^t+\delta\circ\delta^t$ is also positive, but $\sqrt{\lambda Id+D^2} \circ \sqrt{\lambda Id+D^2}=\lambda Id+D^2=\lambda Id\circ Id+D\circ D$. Therefore $\lambda Id\circ Id+D\circ D+\gamma\circ\gamma^t+\delta\circ\delta^t$ is a positive semidefinite Hermitian matrix.

But $\sum_{i} A_i\circ A_i^t=\lambda Id\circ Id+D\circ D+\gamma\circ\gamma^t+\delta\circ\delta^t$, because these matrices are the same principal submatrix of $A^{t_2}$.
\end{proof}

\begin{theorem}\label{SPC}
Every SPC matrix in $M_2\otimes M_2$ is PPT. Thus, every SPC is separable in $M_2\otimes M_2$.
\end{theorem}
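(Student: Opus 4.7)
The plan is to show $A^{t_2}\geq 0$ for every SPC matrix $A\in M_2\otimes M_2$; separability then follows from Horodecki's theorem. Fix a Hermitian Schmidt decomposition $A=\sum_i \lambda_i\gamma_i\otimes\gamma_i$ with $\lambda_i>0$ and orthonormal Hermitian $\gamma_i$. The key idea is that the SPC property is preserved by conjugation $A\mapsto (U\otimes U)A(U^*\otimes U^*)$ for every $U\in U(2)$, so Theorem \ref{inequality} can be applied to each such rotated matrix, not only to $A$ itself; this yields a whole family of $2\times 2$ positivity inequalities rather than a single one.

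First I would apply Theorem \ref{inequality} to $\sum_i\lambda_i(U\gamma_iU^*)\otimes(U\gamma_iU^*)$ and translate the output into a statement about $A^{t_2}$. Setting $u_a=U^*e_a$ and using $(U\gamma_iU^*)_{ab}=\langle u_a,\gamma_i u_b\rangle$ together with the Hermiticity of $\gamma_i$, a direct computation gives
\[\Bigl(\sum_i\lambda_i(U\gamma_iU^*)\circ(U\gamma_iU^*)^t\Bigr)_{a,b}=\sum_i\lambda_i|\langle u_a,\gamma_i u_b\rangle|^2=\langle u_a\otimes\overline{u_a},\,A^{t_2}(u_b\otimes\overline{u_b})\rangle,\]
so the $2\times 2$ Hermitian matrix $M(u_1,u_2)$ with those entries is positive semidefinite for every orthonormal basis $\{u_1,u_2\}$ of $\mathbb{C}^2$. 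Next, any Hermitian vector $v\in\mathbb{C}^2\otimes\mathbb{C}^2$ admits, via the spectral decomposition of $F^{-1}(v)\in M_2$, a representation $v=\alpha_1 u_1\otimes\overline{u_1}+\alpha_2 u_2\otimes\overline{u_2}$ with $\alpha_j\in\mathbb{R}$ and $\{u_1,u_2\}$ orthonormal, whence $\langle v,A^{t_2}v\rangle=(\alpha_1,\alpha_2)M(u_1,u_2)(\alpha_1,\alpha_2)^t\geq 0$.

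To finish I would lift this to arbitrary $v=v_1+iv_2$ with $v_1,v_2$ Hermitian. Since $A^{t_2}=\sum_i\lambda_i\gamma_i\otimes\gamma_i^t$ fits the form of Lemma \ref{hermitianeigenvector}, $A^{t_2}$ has a basis of Hermitian eigenvectors with real eigenvalues and therefore preserves the real subspace of Hermitian vectors. The inner product of two Hermitian vectors is real (it equals $tr(XY)$ for the corresponding Hermitian matrices in $M_2$), so Hermiticity of $A^{t_2}$ forces $\langle v_1,A^{t_2}v_2\rangle=\langle v_2,A^{t_2}v_1\rangle\in\mathbb{R}$, the imaginary cross-terms in $\langle v,A^{t_2}v\rangle$ cancel, and $\langle v,A^{t_2}v\rangle=\langle v_1,A^{t_2}v_1\rangle+\langle v_2,A^{t_2}v_2\rangle\geq 0$. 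Hence $A$ is PPT and Horodecki's theorem on $M_2\otimes M_2$ upgrades this to separability. The main obstacle is the bookkeeping in the displayed identification: relating $A^{t_2}$ to a Schur-product sum coming from a $U\otimes U$ conjugate of $A$ requires tracking the interplay between conjugation by $U\otimes U$ on $A$ and by $U\otimes \overline{U}$ on $A^{t_2}$, together with the identity $(UXU^*)^t=\overline{UXU^*}$ for Hermitian $X$; once this identification is secure, the spectral-decomposition and Hermitian-eigenbasis arguments close the proof cleanly.
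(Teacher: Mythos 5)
Your proposal is correct and is essentially the paper's own argument in a different packaging: the paper also reduces PPT-ness to Theorem \ref{inequality} applied to the $U\otimes U$-conjugate of $A$ (with $U$ built from the spectral basis of the Hermitian matrix $F^{-1}(v)$), identifying the quadratic form $\langle v, A^{t_2}v\rangle$ with $\alpha^t\bigl(\sum_s A_s\circ A_s^t\bigr)\alpha$ exactly as you do. The only difference is organizational — the paper argues by contradiction using a single Hermitian eigenvector of $A^{t_2}$ supplied by Lemma \ref{hermitianeigenvector}, which lets it skip your final cross-term cancellation step, but that step of yours is also valid.
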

\begin{proof}
Since $A$ is a SPC matrix, let $A=\sum_{i}\lambda_i\gamma_i\otimes \gamma_i$ be a Hermitian Schmidt decomposition of $A$ with $\lambda_i>0$.  

 Suppose $A^{t_2}$ has a negative eigenvalue. Since $A^{t_2}=\sum_{i}\lambda_i\gamma_i\otimes \gamma_i^t$,  we can affirm that exists a Hermitian eigenvector $v\in\mathbb{C}^2\otimes\mathbb{C}^2$ associated to this negative eigenvalue, by lemma \ref{hermitianeigenvector}. Let $\sum_{i=1}^2\lambda_i v_i\otimes \overline{v_i}$ be  a spectral decomposition of $v$ $(v_1,v_2$ are orthonormal and $\lambda_i \in\mathbb{R})$. 

Consider the unitary matrix $R\in M_2$ such that $v_1$ is the first column and  $v_2$ is the second. Thus, $v=(R\otimes\overline{R})w$, where $w=\sum_{i=1}^2\lambda_i e_i\otimes e_i$ and $\{e_1,e_2\}$ is the canonical basis of $\mathbb{C}^2$.

 Then
$0>tr(A^{t_2}v\overline{v}^t)=tr(A^{t_2}(R\otimes\overline{R})ww^t(\overline{R}^t\otimes R^t))=tr((\overline{R}^t\otimes R^t)A^{t_2}(R\otimes\overline{R})ww^t)=tr((\overline{R}^t\otimes \overline{R}^t)A(R\otimes R)(ww^t)^{t_2})$.

Since $A$ is SPC then $B=(\overline{R}^t\otimes \overline{R}^t)A(R\otimes R)$ is also SPC, by corollary \ref{defSPC2}. Let $\sum_{s} A_s\otimes A_s$ be a Hermitian decomposition of $B$.
Notice that $(ww^t)^{t_2}=\sum_{i,j=1}^2\lambda_{i}\lambda_{j} e_ie_j^t\otimes e_je_i^t$.

Finally, $0>tr(B(ww^t)^{t_2})=\sum_{i,j=1}^2\sum_s\lambda_{i}\lambda_{j} tr(A_se_ie_j^t)tr(A_se_je_i^t)=\lambda^t(\sum_s A_s\circ A_s^t)\lambda $, where $\lambda^t=(\lambda_1,\lambda_2)$. This is a contradiction with theorem \ref{inequality}. Thus, $A$ is PPT.

\end{proof}

\section{Counterexample}

In this section we show that there exists a SPC matrix in $M_3\otimes M_3$ with tensor rank 3, which is not PPT. Thus,
theorems \ref{tensorrank3} and \ref{SPC} are not true for tensor rank 3 matrices in $M_k\otimes M_m\ (k,m\geq3)$ and for SPC matrices in $M_k\otimes M_k\ (k\geq3)$, respectively. Through this section  we shall denote by $D$ and $A$ the following matrices:
\begin{center}
$D=\left(\begin{array}{ccc}
1 & 0 & 0 \\ 
0 & 3 & 0 \\ 
0 & 0 & -10
\end{array}\right) $ and $A=\left(\begin{array}{ccc}
0 & 1 & 1 \\ 
-1 & 0 & 1 \\ 
-1 & -1 & 0
\end{array}\right) $.
\end{center}

\begin{lemma}\label{lemmacounter} The smallest eigenvalue of $D\otimes D+A\otimes A$ is negative and is smaller than the smallest eigenvalue of $D\otimes D-A\otimes A$, which is also negative.
\end{lemma}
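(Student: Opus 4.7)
Since $A^t=-A$, the matrix $A\otimes A=(A\otimes A)^t$ is real symmetric, so both $D\otimes D\pm A\otimes A$ are real symmetric $9\times 9$ matrices with real spectra. Both commute with the flip operator $T$ on $\mathbb{C}^3\otimes\mathbb{C}^3$, and so block-diagonalise with respect to the decomposition $\mathbb{C}^3\otimes\mathbb{C}^3=S^2(\mathbb{C}^3)\oplus\Lambda^2(\mathbb{C}^3)$ into blocks of dimensions $6$ and $3$. My plan is to write out these blocks explicitly and then combine an upper bound on $\lambda_{\min}(D\otimes D+A\otimes A)$ with a lower bound on $\lambda_{\min}(D\otimes D-A\otimes A)$ that separates them.

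On $\Lambda^2(\mathbb{C}^3)$, in the orthonormal basis $\tilde{e}_{ij}=(e_i\otimes e_j-e_j\otimes e_i)/\sqrt{2}$ ($i<j$), the block of $D\otimes D$ is $\mathrm{diag}(3,-10,-30)$, and the block of $A\otimes A$ has $(ij,kl)$-entry equal to the $2\times 2$ minor $A_{ik}A_{jl}-A_{il}A_{jk}$. A direct check shows that every $2\times 2$ minor of this particular $A$ equals $1$, so $A\otimes A|_{\Lambda^2}=J$ (the all-ones matrix), and $(D\otimes D\pm A\otimes A)|_{\Lambda^2}=\mathrm{diag}(3,-10,-30)\pm J$ has characteristic polynomial $\lambda^3+34\lambda^2+106\lambda-1080$ or $\lambda^3+40\lambda^2+254\lambda-720$, respectively. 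On $S^2(\mathbb{C}^3)$, in the basis $\{\tilde{e}_{ii}\}\cup\{\tilde{e}_{ij}^+\}$, the block of $D\otimes D$ is $\mathrm{diag}(1,9,100,3,-10,-30)$ and the block of $A\otimes A$ is an explicit $6\times 6$ symmetric matrix, whose diagonal entry at $\tilde{e}_{23}^+$ is $-1$ and whose $(\tilde{e}_{11},\tilde{e}_{23}^+)$-entry is $\sqrt{2}$. Negativity of both minima is then immediate: the Rayleigh quotient of $D\otimes D+A\otimes A$ at $\tilde{e}_{23}^+$ and that of $D\otimes D-A\otimes A$ at $\tilde{e}_{23}$ both equal $-30-1=-31$.

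For the strict inequality, restricting $D\otimes D+A\otimes A$ to the three-dimensional subspace of $S^2$ spanned by $\tilde{e}_{11},\tilde{e}_{13}^+,\tilde{e}_{23}^+$ yields a principal submatrix with characteristic polynomial $\lambda^3+41\lambda^2+296\lambda-362$; this polynomial is positive at $\lambda=-31.1$ and negative at $\lambda=-31.15$, so its smallest root $\mu^+$ lies in $(-31.15,-31.1)$, and Cauchy interlacing gives $\lambda_{\min}(D\otimes D+A\otimes A)\le\mu^+<-31.1$. On the minus side, the characteristic polynomial of $(D\otimes D-A\otimes A)|_{\Lambda^2}$ is negative at $\lambda=-31.1$, so its smallest root lies in $(-31.1,-31)$; and the $S^2$-block of $D\otimes D-A\otimes A$ has smallest diagonal entry $-29$, with the $\tilde{e}_{23}^+$-row isolated by a Gershgorin disk $[-29-(2+\sqrt{2}),\,-29+(2+\sqrt{2})]$ disjoint from the other Gershgorin disks of this block, so a second-order perturbation argument (or a direct evaluation of the $6\times 6$ characteristic polynomial at $-31.1$) confirms that the minimum eigenvalue of this block is close to $-29$, well above $-31.1$. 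Together these give $\lambda_{\min}(D\otimes D-A\otimes A)>-31.1>\lambda_{\min}(D\otimes D+A\otimes A)$.

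The main obstacle is purely quantitative: both minimum eigenvalues sit within roughly $0.05$ of $-31.1$, so the separation hinges on sharp evaluations at $\lambda=-31.1$ of the two cubic characteristic polynomials together with careful control of the $S^2$-block of $D\otimes D-A\otimes A$. Once the blocks have been written out, everything reduces to a handful of explicit arithmetic checks.
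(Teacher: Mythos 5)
Your route is genuinely different from the paper's, so let me first note the contrast. The paper works with the two degree-nine characteristic polynomials $p$ and $q$ of $D\otimes D+A\otimes A$ and $D\otimes D-A\otimes A$ and never locates any root numerically: it observes that $p(x)-q(x)=cx^{3}$ with $c>0$, that neither polynomial vanishes at $0$ (so they share no root), and that if the smallest root $m_{q}$ of $q$ were below the smallest root $m_{p}$ of $p$, then $p(m_{q})=(-1)\prod_{i}(m_{q}-r_{i})>0$ while $p(m_{q})=c\,m_{q}^{3}<0$, a contradiction. Your block-diagonalisation with respect to the flip is correct, your blocks check out (every $2\times 2$ minor of this $A$ is indeed $1$; the principal submatrix on $\tilde e_{11},\tilde e_{13}^{+},\tilde e_{23}^{+}$ does have characteristic polynomial $\pm(\lambda^{3}+41\lambda^{2}+296\lambda-362)$, with $g(-31.1)\approx 7.8>0$ and $g(-31.15)\approx-24.7<0$), and the plus-side conclusion $\lambda_{\min}(D\otimes D+A\otimes A)<-31.1$ via the variational principle is sound. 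The price of your approach is that the two minima differ by only about $0.07$, so everything hinges on sharp numerics near $-31.1$; the paper's trick sidesteps this entirely.

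That is exactly where your argument has a gap. To get $\lambda_{\min}(D\otimes D-A\otimes A)>-31.1$ you must control both blocks, and the $S^{2}$ block is not actually handled: the isolated Gershgorin disk you invoke is $[-29-(2+\sqrt 2),\,-29+(2+\sqrt 2)]\approx[-32.41,-25.59]$, which reaches well below $-31.1$, so isolation alone proves nothing, and ``a second-order perturbation argument'' is not a proof without explicit error bounds. The fix you mention parenthetically does work and must be carried out: the disk around $-29$ is disjoint from the other five disks (all contained in $[-12.42,103.42]$), so exactly one eigenvalue of the $6\times 6$ block lies in it; it then suffices to check that $\det\bigl(\text{block}+31.1\,Id\bigr)>0$, which forces that single eigenvalue above $-31.1$. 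There is a smaller hole of the same kind for the $\Lambda^{2}$ block: from $r(-31.1)<0$ alone one cannot conclude that the smallest root of $r(\lambda)=\lambda^{3}+40\lambda^{2}+254\lambda-720$ exceeds $-31.1$, since a cubic with positive leading coefficient is also negative to the left of all its roots; you need in addition $r(-31)>0$, $r(-12)>0$, $r(-11)<0$ and $r(0)<0<r(3)$ to account for all three roots and confirm none lies below $-31.1$. With these explicit evaluations supplied the argument closes, but as written the decisive separation on the minus side is asserted rather than proved.
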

\begin{proof}
The characteristic polynomial of $D\otimes D+A\otimes A$ is \begin{center}
$p(x)=-x^9+36x^8+5420 x^7+104400 x^6-427924 x^5-14134608x^4+11251344 x^3+415 328 832 x^2-1106058240 x+671846400$
\end{center} and the characteristic polynomial of $D\otimes D-A\otimes A$ is\begin{center}
$q(x)=-x^9+36x^8+5420 x^7+104400 x^6-427924 x^5-14134608x^4+10924160 x^3+415 328 832 x^2-1106058240 x+671846400$.
\end{center}

First, notice that $p(x)-q(x)=cx^3$ with $c>0$.
Next, $D\otimes D+A\otimes A$ and $D\otimes D-A\otimes A$ are real symmetric matrices, therefore $p(x)$ and $q(x)$ have only real roots.  Notice that $0$ is not a root of $p(x)$ and neither of $q(x)$.

Since $p(x)-q(x)=cx^3$ and $0$ is not a root of $p(x)$ and neither of $q(x)$ then $p(x)$ and $q(x)$ do not have a common root.
Let us write $p(x)=(-1)(x-r_1)\ldots(x-r_9)$ and $q(x)=(-1)(x-s_1)\ldots(x-s_9)$.
If  $p(x)$ and $q(x)$ had only positive roots then all the coefficients of $p(-x)$ and $q(-x)$ would be positive.
Notice that the coefficient of $x^7$ of $p(-x)$ and $q(-x)$ are negative. Therefore $p(x)$ and $q(x)$ have negative roots. 
Let $m_p$ be the smallest root of $p(x)$ and $m_q$ be the smallest root of $q(x)$.
By contradiction, suppose that $m_q<m_p$. 

Since $m_q<m_p\leq r_i$ and  $p(m_q)=(-1)(m_q-r_1)\ldots(m_q-r_9)$ then $p(m_q)$ is positive as a product of 10 negative numbers, but $p(m_q)=p(m_q)-q(m_q)=c(m_q)^3<0.$ Absurd!

Therefore $m_p<m_q$, because $p(x)$ and $q(x)$ do not have a common root.

\end{proof}

\begin{proposition} \label{example}Let $m_q$ be the smallest eigenvalue of $D\otimes D-A\otimes A$. The matrix $C=|m_q| Id\otimes Id+D\otimes D+(iA)\otimes (iA)$ is a SPC matrix with tensor rank 3, but it is not PPT. Therefore it is not separable.
\end{proposition}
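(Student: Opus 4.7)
The plan is to verify each of the three claims about $C$ in turn, leveraging Lemma \ref{lemmacounter} for the failure of PPT and Corollary \ref{defSPC2} for the SPC property.

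First I would observe the key algebraic identities. Since $D$ is real diagonal it is Hermitian with $D^t=D$, and since $A$ is real antisymmetric, $iA$ is Hermitian with $(iA)^t = iA^t = -iA$. Consequently $(iA)\otimes(iA) = -A\otimes A$, while $(iA)\otimes(iA)^t = (iA)\otimes(-iA) = A\otimes A$. Thus
\begin{equation*}
C \;=\; |m_q|\,Id\otimes Id + D\otimes D - A\otimes A, \qquad C^{t_2} \;=\; |m_q|\,Id\otimes Id + D\otimes D + A\otimes A.
\end{equation*}

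Next I would show $C$ is SPC of tensor rank $3$. By the very definition of $m_q$ as the smallest eigenvalue of $D\otimes D - A\otimes A$, the matrix $|m_q|\,Id\otimes Id + (D\otimes D - A\otimes A)$ is positive semidefinite, so $C\geq 0$. The expression $C = (\sqrt{|m_q|}\,Id)\otimes(\sqrt{|m_q|}\,Id) + D\otimes D + (iA)\otimes(iA)$ is a Hermitian decomposition of the form $\sum_{i=1}^3 \gamma_i\otimes\gamma_i$ with all coefficients equal to $1>0$, so Corollary \ref{defSPC2} yields that $C$ is SPC, and the tensor rank is at most $3$.

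I would then establish that $C$ is not PPT, which simultaneously pins the tensor rank at exactly $3$. Lemma \ref{lemmacounter} states that $m_p<m_q<0$, where $m_p$ is the smallest eigenvalue of $D\otimes D + A\otimes A$. Hence $|m_q|<|m_p|$, so the smallest eigenvalue of $C^{t_2} = |m_q|\,Id\otimes Id + D\otimes D + A\otimes A$ equals $|m_q|+m_p<0$; thus $C^{t_2}$ is not positive semidefinite and $C$ is not PPT. In particular $C$ is not separable. Finally, if the tensor rank of $C$ were at most $2$, then by Theorem 4.7 of \cite{cariello} (cited in the proof of Theorem \ref{tensorrank3}) $C$ would be separable and hence PPT, contradicting what we just proved; so the tensor rank is exactly $3$.

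The only genuine content is Lemma \ref{lemmacounter}, which is already established; everything else is bookkeeping with the two identities $(iA)\otimes(iA)=-A\otimes A$ and $(iA)\otimes(iA)^t = A\otimes A$. The only mild subtlety is remembering that proving $C$ is not PPT is what forces the tensor rank to equal $3$ rather than be $\leq 2$, so the order of the three verifications matters.
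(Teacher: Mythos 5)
Your proof is correct and follows essentially the same route as the paper: positivity of $C$ from the definition of $m_q$, SPC via Corollary \ref{defSPC2}, and failure of PPT by noting that $C^{t_2}=|m_q|\,Id\otimes Id+D\otimes D+A\otimes A$ has the negative eigenvalue $|m_q|+m_p$ thanks to Lemma \ref{lemmacounter}. The only (harmless) divergence is the tensor-rank count: the paper simply observes that $\{Id,D,A\}$ is linearly independent, whereas you rule out rank $\le 2$ indirectly through separability of tensor rank $\le 2$ matrices; both arguments are valid.
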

\begin{proof}
Since $m_q$ is the smallest eigenvalue of $D\otimes D-A\otimes A$ then $C=|m_q|Id\otimes Id+D\otimes D-A\otimes A$ is positive semidefinite. By corollary \ref{defSPC2}, $C=|m_q|Id\otimes Id+D\otimes D+(iA)\otimes (iA)$ is SPC. Since $\{Id,D, A\}$ is a linear independent set then $C$ has tensor rank 3.

Now, $|m_q|+m_p$ is an eigenvalue of $C^{t_2}=|m_q|Id\otimes Id+D\otimes D+A\otimes A$, where $m_p$ is the smallest eigenvalue of $D\otimes D+A\otimes A$. By lemma \ref{lemmacounter}, $m_q-m_p>0$ and $-m_q+m_p=|m_q|+m_p<0$.  Thus, $C$ is not PPT.
\end{proof}

\section{Non Trivial Example}

In this section we present  a non trivial family of matrices in $M_k\otimes M_k$, in which the SPC property and  the PPT property are equivalent $($proposition \ref{nontrivial}$)$. Inside this family, we discover a non trivial subfamily in which the SPC property is equivalent to separability$($proposition \ref{exampleseparable}$)$.

\begin{lemma}\label{lemmaexample} Let $A\in M_k\otimes M_k$ be a Hermitian matrix such that 
$A=\sum_{j=1}^n\gamma_j\otimes\gamma_j$, where $\gamma_j=i(B_j)$ and $B_j$ is a real anti-symmetric matrix for each $j$. If $\lambda$ is the smallest eigenvalue of $A$ then $\lambda$ is negative and $|\mu|\leq|\lambda|$ for any other eigenvalue $\mu$ of A.
\end{lemma}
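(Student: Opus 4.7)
The plan rests on the identity $A^{t_2}=-A$. Since $\gamma_j=iB_j$ with $B_j$ real antisymmetric, $\gamma_j^t=iB_j^t=-iB_j=-\gamma_j$, hence $A^{t_2}=\sum_j\gamma_j\otimes\gamma_j^t=-A$. Taking traces yields $tr(A)=-tr(A)=0$; assuming $A\neq 0$ (otherwise there is nothing to prove), $A$ must have both positive and negative eigenvalues, so $\lambda<0$.

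For the bound $|\mu|\leq|\lambda|$ I would translate the eigenvalue equation for $A$ into one for a completely positive map on $M_k$. Using the identity $(X\otimes Y)F(W)=F(XWY^t)$, the relation $AF(W)=\mu F(W)$ becomes $\sum_j\gamma_j W\gamma_j^t=\mu W$, i.e., $L(W)=-\mu W$, where $L:M_k\rightarrow M_k$ is defined by $L(W)=\sum_j\gamma_j W\gamma_j$. Hence the eigenvalues of $A$ are exactly the negatives of those of $L$. Since each $\gamma_j$ is Hermitian, $L$ is a Kraus-form CP map (so it preserves the PSD cone) and is self-adjoint with respect to the Hilbert--Schmidt inner product, so its eigenvalues are real. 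The problem thus reduces to showing $|\nu_{\min}(L)|\leq\nu_{\max}(L)$, so that the spectral radius of $L$ coincides with its largest eigenvalue.

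For that inequality, take a Hermitian eigenvector $W$ of $L$ and use the Jordan decomposition $W=W_+-W_-$ with $W_\pm\geq 0$ of orthogonal support. Expanding,
\[
tr(WL(W))=tr(W_+L(W_+))+tr(W_-L(W_-))-2\,tr(W_+L(W_-)),
\]
each of the three trace terms is non-negative: the first two equal $\sum_j tr((W_\pm^{1/2}\gamma_j W_\pm^{1/2})^2)$, and the third equals $\sum_j \Vert W_+^{1/2}\gamma_j W_-^{1/2}\Vert_{HS}^2$ (take the $*$ of $W_+^{1/2}\gamma_j W_-^{1/2}$ using Hermiticity of $\gamma_j$). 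Consequently $|tr(WL(W))|\leq tr(|W|\,L(|W|))$ with $|W|=W_++W_-$, and since $\Vert |W|\Vert_{HS}=\Vert W\Vert_{HS}$, the variational characterization of eigenvalues of self-adjoint operators yields $|\nu_{\min}(L)|\leq\nu_{\max}(L)$. Hence $r(L)=\nu_{\max}(L)$ is itself an eigenvalue of $L$, so $-r(L)$ is an eigenvalue of $A$; the bound just proved gives $|\mu|\leq r(L)$ for every eigenvalue $\mu$ of $A$, which forces $-r(L)$ to be the smallest, so $\lambda=-r(L)$ and $|\mu|\leq|\lambda|$ for all $\mu$. The main obstacle is this spectral-radius comparison for $L$; the rest is routine bookkeeping.
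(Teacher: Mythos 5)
Your proof is correct, and it takes a genuinely different route from the paper's. The paper also starts from $A^{t_2}=-A$, but it then applies the realignment map $S$: since $S(A^{t_2})=\sum_j F(\gamma_j)\overline{F(\gamma_j)}^t$ is positive semidefinite, the eigenvalues of $A^{t_2}$ become the coefficients of a Hermitian Schmidt decomposition of $S(A^{t_2})$, and the paper invokes Lemma 2.9 of \cite{cariello} (the leading block of the Hermitian Schmidt decomposition of a PSD matrix is itself PSD) together with a trace argument to rule out the possibility that all maximal-modulus coefficients are negative; negating then gives $\lambda=-\lambda_1<0$ dominant. You instead conjugate $A$ by $F$ into the completely positive map $L(W)=\sum_j\gamma_jW\gamma_j$ and prove the Perron--Frobenius-type inequality $|\nu_{\min}(L)|\leq\nu_{\max}(L)$ directly via the Jordan decomposition $W=W_+-W_-$ and the estimate $|tr(WL(W))|\leq tr(|W|\,L(|W|))$; your computation of the three trace terms is right, and the variational step is sound. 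What your route buys is self-containedness: it does not rely on the external Lemma 2.9, and it isolates the real mechanism (positivity of the map whose matrix is the realignment of $A^{t_2}$) in a standard operator-theoretic form. What the paper's route buys is brevity given its existing toolbox ($S$, Lemma \ref{hermitianeigenvector}, and the cited lemma). One small point you should make explicit: to apply your Jordan-decomposition argument to \emph{every} eigenvalue of $L$, you need that every eigenvalue of $L$ admits a Hermitian eigenvector; this follows in one line because $L$ preserves Hermiticity and is self-adjoint with real spectrum (split an eigenvector as $W_1+iW_2$ with $W_1,W_2$ Hermitian, exactly as in the proof of Lemma \ref{hermitianeigenvector}). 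With that sentence added, the argument is complete.
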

\begin{proof} First, $A^{t_2}=\sum_{j=1}^n\gamma_j\otimes\gamma_j^t=\sum_{j=1}^n\gamma_j\otimes\overline{\gamma_j}$. Let $v_j=F(\gamma_j)$. Therefore $S(A^{t_2})=\sum_{j=1}^nv_j\overline{v_j}^t$ is a positive semidefinite Hermitian matrix.

Next, by lemma \ref{hermitianeigenvector}, $A^{t_2}$ has a spectral decomposition $\sum_{l=1}^m\lambda_lw_l\overline{w_l}^t$  such that $w_l$ is Hermitian for each $l$. Therefore $S(A^{t_2})=\sum_{l=1}^m\lambda_l\delta_i\otimes \delta_l^t $, where $\delta_l=F^{-1}(w_l)$, by lemma \ref{prop1S}. Notice that $\{\delta_1,\ldots,\delta_m\}$ is an orthonormal set of Hermitan matrices, because $F$ is an isometry $($See remark \ref{spectral}$)$.

Let us suppose that $|\lambda_1|=\ldots=|\lambda_s|>|\lambda_{s+1}|\geq\ldots\geq|\lambda_m|$, where $1\leq s\leq m$. 

Let us write, $S(A^{t_2})=\sum_{l=1}^m|\lambda_l|\delta_l\otimes (\frac{\lambda_l}{|\lambda_l|}\delta_l^t) $ . Now this is a Hermitian Schmidt decomposition of $S(A^{t_2})$.

Now, by lemma 2.9 of \cite{cariello}, $D=\sum_{l=1}^s\delta_l\otimes \frac{\lambda_l}{|\lambda_l|}\delta_l^t $ is positive semidefinite.

By contradiction, suppose that $\lambda_l<0$, for $1\leq l\leq s$, then $\frac{\lambda_l}{|\lambda_l|}=-1$ and
$tr(D)=tr(\sum_{l=1}^s\delta_l\otimes (-1)\delta_l^t )=-\sum_{l=1}^str(\delta_l)tr(\delta_l^t)<0$. This is a contradiction with the positivity of $D$.

Therefore we can suppose that $\lambda_1>0$ and notice that $|\lambda_l|\leq\lambda_1$, for $1\leq l\leq m$.

Remind that $A=\sum_{j=1}^n\gamma_j\otimes\gamma_j$, where $\gamma_j=i(B_j)$ and $B_j$ is a real anti-symmetric matrix for each $j$, thus $A^{t_2}=-A$.

Finally, since $A^{t_2}$ has the following spectral decomposition $\sum_{l=1}^m\lambda_lw_l\overline{w_l}^t$ and $A^{t_2}=-A$, then $A$ has the following spectral decomposition $A=\sum_{i=1}^m(-\lambda_l)w_l\overline{w_l}^t$.

 Notice that $|-\lambda_l|\leq|-\lambda_1|$, for $1\leq l\leq m$. Thus, the smallest eigenvalue of $A$ is $-\lambda_1$.

\end{proof}

\begin{proposition} \label{nontrivial} Let $A\in M_k\otimes M_k$ be as in lemma \ref{lemmaexample}. The matrix $C=\alpha Id\otimes Id+A$ is SPC if and only if $C$ is PPT.
\end{proposition}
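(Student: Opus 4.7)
The plan is to reduce both the SPC condition and the PPT condition on $C$ to simple linear inequalities on $\alpha$ in terms of the extreme eigenvalues of $A$, and then to invoke Lemma \ref{lemmaexample} to see that these two inequalities actually coincide.

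First I would record the key algebraic identity. Since each $\gamma_j=iB_j$ with $B_j$ real antisymmetric, one has $\gamma_j^{t}=iB_j^{t}=-iB_j=-\gamma_j$, hence
$$A^{t_2}=\sum_{j=1}^{n}\gamma_j\otimes\gamma_j^{t}=-A,$$
so that $C^{t_2}=\alpha\,Id\otimes Id-A$.

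Next I would argue that $C$ is SPC if and only if $C\geq 0$. The expression $C=\alpha\cdot Id\otimes Id+\sum_{j=1}^{n}1\cdot\gamma_j\otimes\gamma_j$ is already a Hermitian decomposition of the form appearing in Corollary \ref{defSPC2}, and provided the coefficients are positive, Corollary \ref{defSPC2} immediately identifies SPC with positive semidefiniteness. The only small point is to verify positivity of $\alpha$ when $A\neq 0$: Lemma \ref{lemmaexample} gives $\lambda_{\min}(A)<0$, so $C\geq 0$ forces $\alpha\geq -\lambda_{\min}(A)>0$; the case $A=0$ is trivial. Thus $C$ is SPC precisely when $\alpha\geq|\lambda_{\min}(A)|$.

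In parallel, $C^{t_2}=\alpha\,Id\otimes Id-A\geq 0$ is equivalent to $\alpha\geq\lambda_{\max}(A)$, so $C$ is PPT precisely when $\alpha\geq\max\{|\lambda_{\min}(A)|,\lambda_{\max}(A)\}$. The theorem therefore collapses onto the single inequality $\lambda_{\max}(A)\leq|\lambda_{\min}(A)|$, which is exactly what Lemma \ref{lemmaexample} delivers: every eigenvalue $\mu$ of $A$ satisfies $|\mu|\leq|\lambda_{\min}(A)|$, and in particular $\lambda_{\max}(A)\leq|\lambda_{\min}(A)|$. Hence the two conditions on $\alpha$ agree, giving SPC $\Longleftrightarrow$ PPT. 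I do not expect a genuine obstacle here; the whole statement is essentially a spectral consequence of Lemma \ref{lemmaexample}, and the only routine caveat is the positivity of $\alpha$ needed in order to legitimately apply Corollary \ref{defSPC2}.
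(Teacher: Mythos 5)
Your proposal is correct and follows essentially the same route as the paper: both reduce the question to positivity of $C$ via Corollary \ref{defSPC2} (noting $C\geq 0$ forces $\alpha\geq|\lambda_{\min}(A)|>0$), use $A^{t_2}=-A$ to get $C^{t_2}=\alpha\,Id\otimes Id-A$, and invoke Lemma \ref{lemmaexample} to bound all eigenvalues of $A$ by $|\lambda_{\min}(A)|$. Your phrasing in terms of explicit thresholds on $\alpha$ is only a cosmetic reorganization of the paper's argument that positivity implies both SPC and PPT.
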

\begin{proof}
Let us prove that the positivity of $C$ implies that $C$ is SPC and PPT.
By definition, SPC and PPT properties imply positivity. Thus, these three properties are equivalent for this type of $C$.

Let $\lambda$ be the smallest eigenvalue of $A$. By lemma \ref{lemmaexample}, $\lambda$ is negative. If $C$ is positive then $\alpha\geq|\lambda|$.

This matrix is SPC by corollary \ref{defSPC2}.

Now $C^{t_2}=\alpha Id\otimes Id+A^{t_2}=\alpha Id\otimes Id-A$. The eigenvalues of $A$, by lemma \ref{lemmaexample}, have absolute value smaller or equal to $|\lambda|$, therefore $C^{t_2}$ is positive semidefinite. Thus, $C$ is PPT.

\end{proof}

In the next theorem, we show that the SPC property is equivalent to separability for certain matrices of the same type described in  proposition \ref{nontrivial}. In order to provide this example, we need the following lemma.

Denote by $Sym(m)$ the subspace of the symmetric matrices in $M_{m}$ and by $ASym(m)$ the subspace of the anti-symmetric matrices in $M_{m}$.

\begin{lemma}\label{basis} Exists an orthonormal basis of $Sym(2^n)$ formed by real symmetric  matrices such that the absolute value of all their eigenvalues  is $\frac{1}{\sqrt{2^n}}$. Exists an orthonormal basis of $ASym(2^n)$ formed by real anti-symmetric  matrices such that the absolute value of all their eigenvalues  is $\frac{1}{\sqrt{2^n}}$.
\end{lemma}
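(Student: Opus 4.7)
The plan is to construct both orthonormal bases simultaneously using $n$-fold tensor products of four specific real $2\times 2$ matrices. Let $I$ denote the $2\times 2$ identity and set
$$X=\begin{pmatrix}0&1\\1&0\end{pmatrix},\quad Z=\begin{pmatrix}1&0\\0&-1\end{pmatrix},\quad J=\begin{pmatrix}0&1\\-1&0\end{pmatrix}.$$
Observe that $I,X,Z$ are real symmetric with eigenvalues in $\{\pm 1\}$, while $J$ is real anti-symmetric with eigenvalues $\pm i$; moreover $\{I,X,Z,J\}$ is pairwise orthogonal in $M_2$ with $tr(AA^*)=2$ for each of these four matrices. For every $\varepsilon=(\varepsilon_1,\ldots,\varepsilon_n)\in\{I,X,Z,J\}^n$, define
$$M_\varepsilon \;=\; 2^{-n/2}\,\varepsilon_1\otimes\cdots\otimes\varepsilon_n \;\in\; M_{2^n}.$$

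First I would use the tensor-product identity $tr((A\otimes B)(C\otimes D)^*)=tr(AC^*)\,tr(BD^*)$ to deduce that the $4^n$ matrices $M_\varepsilon$ are orthonormal with respect to the Hilbert--Schmidt inner product on $M_{2^n}$. Next, since $(A\otimes B)^t=A^t\otimes B^t$ and only $J$ is anti-symmetric among $\{I,X,Z,J\}$, I get $M_\varepsilon^t=(-1)^{k(\varepsilon)}M_\varepsilon$, where $k(\varepsilon)$ is the number of indices $i$ with $\varepsilon_i=J$. Thus each $M_\varepsilon$ is real, and is symmetric precisely when $k(\varepsilon)$ is even and anti-symmetric precisely when $k(\varepsilon)$ is odd.

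I would then verify the dimension count via the binomial identity
$$\sum_{k\text{ even}}\binom{n}{k}3^{n-k}=\frac{(3+1)^n+(3-1)^n}{2}=\frac{4^n+2^n}{2}=\frac{2^n(2^n+1)}{2}=\dim Sym(2^n),$$
together with the complementary identity $\sum_{k\text{ odd}}\binom{n}{k}3^{n-k}=\tfrac{4^n-2^n}{2}=\dim ASym(2^n)$. Since the even-parity $M_\varepsilon$'s lie in $Sym(2^n)$, are orthonormal, and are of the right cardinality, they form an orthonormal basis of $Sym(2^n)$; the odd-parity ones similarly form an orthonormal basis of $ASym(2^n)$.

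Finally, the eigenvalue condition follows from the fact that the eigenvalues of a Kronecker product are the products of the eigenvalues of the factors: each $\varepsilon_i$ has eigenvalues of modulus $1$, hence so does $\varepsilon_1\otimes\cdots\otimes\varepsilon_n$, and dividing by $\sqrt{2^n}$ gives the required absolute value $1/\sqrt{2^n}$. I do not anticipate a real obstacle in this argument; the only point that needs care is the parity-versus-dimension bookkeeping, and the binomial identity above shows the counts are exactly right, so orthonormality automatically upgrades to ``orthonormal basis'' in each subspace.
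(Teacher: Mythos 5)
Your proof is correct and is essentially the paper's argument with the induction unrolled: the paper builds the same basis recursively from the $2\times 2$ blocks $\{I,X,Z,J\}/\sqrt{2}$ via the decompositions $Sym(2^k)=[Sym(2)\otimes Sym(2^{k-1})]\oplus[ASym(2)\otimes ASym(2^{k-1})]$ and its anti-symmetric counterpart, whereas you write the resulting tensor-product (Pauli-string) basis directly and replace the inductive dimension bookkeeping with the explicit binomial identity. Both rest on the same two facts --- orthogonality of Kronecker products of orthogonal factors and multiplicativity of eigenvalues under $\otimes$ --- so this is the same proof in a slightly different packaging.
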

\begin{proof}
The proof is by induction on $n$. 
If $n=1$, the basis of $Sym(2)$ and $ASym(2)$ required are

$\left\{S_1'=\left(\begin{array}{cc}
\frac{1}{\sqrt{2}} & 0 \\ 
0 & \frac{1}{\sqrt{2}}
\end{array}\right), 
S_2'=\left(\begin{array}{cc}
\frac{1}{\sqrt{2}} & 0 \\ 
0 & \frac{-1}{\sqrt{2}}
\end{array}\right),
S_3'=\left(\begin{array}{cc}
0 & \frac{1}{\sqrt{2}}  \\ 
\frac{1}{\sqrt{2}} & 0
\end{array} \right) \right\}$ and
$\left\{A'=\left(\begin{array}{cc}
0 & \frac{1}{\sqrt{2}}  \\ 
\frac{-1}{\sqrt{2}} & 0
\end{array} \right)\right\}$.

Suppose the result is true for $n=k-1$. 

Let $m= 2^{k-1}$ and let $\{S_1,\ldots,S_{\frac{m(m+1)}{2}}\}$ be the orthonormal basis of $Sym(m)$ announced in this theorem. Let $\{A_1,\ldots,A_{\frac{m(m-1)}{2}}\}$ be the orthonormal basis of $ASym(m)$ announced in this theorem.

Consider the following decompositions:
\begin{center}
$Sym(2^k)=[Sym(2)\otimes Sym (2^{k-1})]\oplus [ASym(2)\otimes ASym (2^{k-1})]$,
$ASym(2^k)=[Sym(2)\otimes ASym (2^{k-1})]\oplus [ASym(2)\otimes Sym (2^{k-1})]$.
\end{center}

Therefore, the set $\{S'_i\otimes S_j,\ A'\otimes A_s\ |\ 1\leq i\leq 3, 1\leq j\leq\frac{m(m+1)}{2},  1\leq s\leq\frac{m(m-1)}{2} \}$
is an orthonormal basis of $Sym(2^k)$. The eigenvalues of $S_i'\otimes S_j$ and
$A'\otimes A_s$ are the product of the eigenvalues of $S'_i, S_j$ and $A', A_s$, respectively. Therefore the absolute value of all their eigenvalues is $\frac{1}{\sqrt{2}}\times \frac{1}{\sqrt{2^{k-1}}}=\frac{1}{\sqrt{2^{k}}}$.

Next, the set $\{S_i'\otimes A_s,\ A'\otimes S_j\ |\ 1\leq i\leq 3, 1\leq j\leq\frac{m(m+1)}{2},  1\leq s\leq\frac{m(m-1)}{2} \}$
is an orthonormal basis of $ASym(2^k)$, by the decomposition above. The eigenvalues of $S_i'\otimes A_s$ and $A'\otimes S_j$ are the product of the eigenvalues of $S_i', A_s$ and $A', S_j$, respectively. Therefore the absolute value of all their eigenvalues is $\frac{1}{\sqrt{2}}\times \frac{1}{\sqrt{2^{k-1}}}=\frac{1}{\sqrt{2^{k}}}$.
\end{proof}

\begin{theorem}\label{exampleseparable} Let $\{e_1,\ldots,e_{2^n}\}$ be the canonical basis of $\mathbb{C}^{2^n}$. Let $u=\sum_{l=1}^{2^n}e_l\otimes e_l\in\mathbb{C}^{2^n}\otimes\mathbb{C}^{2^n}$. The matrix  $C=\alpha Id\otimes Id + \frac{1}{2}(T-uu^t)\in M_{2^n}\otimes M_{2^n}$ is SPC if and only if $C$ is separable. Notice that $C$ is a matrix of the same type described in proposition \ref{nontrivial}

\end{theorem}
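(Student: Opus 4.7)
My plan is to realize $C$ as a member of the family in Proposition \ref{nontrivial}—which immediately gives SPC $\Leftrightarrow$ PPT $\Leftrightarrow$ $C\ge 0$—and then to exhibit an explicit separable decomposition of $C$ whenever $C$ is positive semidefinite, matching the positivity threshold exactly. Since separable $\Rightarrow$ positive $\Rightarrow$ SPC is automatic, this will close the equivalence.

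First I would use Lemma \ref{basis} with $m=2^n$ to fix an orthonormal basis $\{A_1,\ldots,A_N\}$, $N=m(m-1)/2$, of $ASym(m)$ consisting of real anti-symmetric matrices with every eigenvalue of modulus $1/\sqrt m$. Since $F$ is an isometry and $TF(A_s)=F(A_s^t)=-F(A_s)$, the set $\{F(A_s)\}$ is an orthonormal basis of the antisymmetric subspace of $\mathbb C^m\otimes\mathbb C^m$, so $\sum_s F(A_s)F(A_s)^t=(Id-T)/2$. Applying $S$ and using Lemma \ref{prop1S} together with the direct identities $S(Id)=uu^t$ and $S(T)=T$ gives
\[
\sum_{s=1}^{N}A_s\otimes A_s=\tfrac12(uu^t-T),\qquad\text{hence}\qquad \tfrac12(T-uu^t)=\sum_{s=1}^{N}(iA_s)\otimes(iA_s),
\]
which puts $C$ in the form $\alpha\,Id\otimes Id+\sum(iA_s)\otimes(iA_s)$ of Proposition \ref{nontrivial}. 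A direct eigenvalue computation—$\tfrac12(T-uu^t)$ acts as $-\tfrac12$ on antisymmetric vectors, as $+\tfrac12$ on symmetric vectors orthogonal to $u$, and (using $Tu=u$, $uu^tu=mu$) as $(1-m)/2$ on $\mathrm{span}(u)$—shows that $C\ge 0$ iff $\alpha\ge(m-1)/2$.

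For the construction I would set $c=\sqrt{2\alpha/(m(m-1))}$ and $X_s^{\pm}=cId\pm iA_s\in M_m$. Lemma \ref{basis} gives $A_sA_s^t=\tfrac1m Id$, so $(iA_s)^2=\tfrac1m Id$ and $iA_s$ is Hermitian with eigenvalues $\pm1/\sqrt m$; the hypothesis $\alpha\ge(m-1)/2$ is equivalent to $c\ge1/\sqrt m$, which makes every $X_s^{\pm}$ positive semidefinite. The cross terms $\pm ic(Id\otimes A_s+A_s\otimes Id)$ cancel in the symmetrized sum, so
\[
\tfrac12\bigl(X_s^{+}\otimes X_s^{+}+X_s^{-}\otimes X_s^{-}\bigr)=c^2\,Id\otimes Id-A_s\otimes A_s.
\]
Summing over $s$ and invoking the identity from the first step yields $Nc^2\,Id\otimes Id+\tfrac12(T-uu^t)=\alpha\,Id\otimes Id+\tfrac12(T-uu^t)=C$, an explicit decomposition of $C$ as a sum of tensor products of positive semidefinite Hermitian matrices.

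The main obstacle is the eigenvalue constraint in Lemma \ref{basis}: only when $m=2^n$ do we get $(iA_s)^2=\tfrac1m Id$, which is precisely what aligns the critical value $c=1/\sqrt m$ with the positivity threshold $\alpha=(m-1)/2$. For general $m$ the corresponding $X_s^{\pm}$ would need a larger $c$ and the separable decomposition would require $\alpha$ strictly bigger than the positivity threshold, so the equivalence would not hold on the nose.
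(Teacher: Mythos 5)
Your proof is correct, and its overall skeleton matches the paper's: both put $C$ into the form $\alpha\, Id\otimes Id+\sum_s(iA_s)\otimes(iA_s)$ of Proposition \ref{nontrivial} (so that SPC $\Leftrightarrow$ positivity), identify the positivity threshold $\alpha\ge(2^n-1)/2$, and then produce a separable decomposition using the special basis of Lemma \ref{basis}. The differences are in two steps. First, you derive the identity $\tfrac12(T-uu^t)=\sum_s(iA_s)\otimes(iA_s)$ by recognizing $\sum_s F(A_s)F(A_s)^t$ as the projection $(Id-T)/2$ onto the antisymmetric subspace and applying $S$ via Lemma \ref{prop1S}, whereas the paper quotes the expansions $uu^t=\sum_l S_l\otimes S_l+\sum_j A_j\otimes A_j$ and $T=(uu^t)^{t_2}$; both are fine. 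Second, and more substantively, for separability the paper splits $C$ as $(k\alpha-\tfrac{k(k-1)}{2})(\tfrac1k Id\otimes Id)+\sum_j\bigl(\tfrac1k Id\otimes Id+(iA_j)\otimes(iA_j)\bigr)$ and cites Theorem 4.7 of \cite{cariello} (positive semidefinite tensor rank $2$ implies separable) for each summand, while you give the fully explicit decomposition $\tfrac12\bigl(X_s^{+}\otimes X_s^{+}+X_s^{-}\otimes X_s^{-}\bigr)=c^2Id\otimes Id-A_s\otimes A_s$ with $X_s^{\pm}=cId\pm iA_s\succeq0$ exactly when $c\ge1/\sqrt{2^n}$, i.e.\ when $\alpha\ge(2^n-1)/2$. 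Your version is self-contained (it effectively unwinds what the cited rank-$2$ theorem would produce) and shows the separable decomposition saturating the positivity threshold on the nose; the paper's version is shorter but leans on the external reference. Your closing observation about why $m=2^n$ is needed is also consistent with the paper's reliance on Lemma \ref{basis}.
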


\begin{proof}  Let us prove that the positivity of $C$ implies that $C$ is SPC and separable.
By definition, the SPC property and the separability property imply positivity. Thus, these three properties are equivalent for this type of $C$.

Let $k=2^n$.
Remind that $T$ is the flip operator whose eigenvalues are $1$ or $-1$. 
Now, $uu^t$ is a real symmetric  matrix whose eigenvalues are $k$ or $0$ and $u$ is an eigenvector of $T$ associated to 1. Therefore, $\frac{1}{2}(T-uu^t)$ is a real symmetric matrix whose eigenvalues are $-\frac{k-1}{2}$, $\frac{1}{2}$ or $-\frac{1}{2}$. Thus, we need $\alpha\geq \frac{k-1}{2}$ for the positivity of $C=\alpha Id\otimes Id + \frac{1}{2}(T-uu^t)$.

Next, if $\{S_1,\ldots S_{\frac{k(k+1)}{2}}\}$ is any orthonormal basis of $Sym(k)$, formed by real matrices, and $A_1,\ldots,A_{\frac{k(k-1)}{2}}$ is any orthonormal basis of $ASym(k)$, formed by real matrices, then 
\begin{center}
$uu^t=\sum_{l=1}^{\frac{k(k+1)}{2}}S_l\otimes S_l + \sum_{j=1}^{\frac{k(k-1)}{2}}A_j\otimes A_j$ and  $T=(uu^t)^{t_2}=\sum_{l=1}^{\frac{k(k+1)}{2}}S_l\otimes S_l - \sum_{j=1}^{\frac{k(k-1)}{2}}A_j\otimes A_j$.
\end{center}

Thus, $C=\alpha Id\otimes Id + \frac{1}{2}(T-uu^t)= \alpha Id\otimes Id - \sum_{j=1}^{\frac{k(k-1)}{2}} A_j\otimes A_j$ and  \begin{center}
$C=\alpha Id\otimes Id+\sum_{j=1}^{\frac{k(k-1)}{2}} (iA_j)\otimes (iA_j).$
\end{center}

Therefore, $C$ is SPC by lemma \ref{defSPC2}. Notice that $C$ has the  format described in the proposition \ref{nontrivial}. 

Now, we may suppose that $A_1,\ldots,A_{\frac{k(k-1)}{2}}$ is the basis constructed in lemma \ref{basis}. Therefore their eigenvalues  have absolute value equal to $\frac{1}{\sqrt{2^n}}=\frac{1}{\sqrt{k}}$. Thus, $\frac{1}{k} Id\otimes Id +(iA_j)\otimes(iA_j)$ is positive semidefinite with tensor rank 2. Therefore these matrices are separable by theorem 4.7 in \cite{cariello}. 

Next, $C=(k\alpha)(\frac{1}{k} Id\otimes Id)+\sum_{j=1}^{\frac{k(k-1)}{2}} (iA_j)\otimes (iA_j)=$\begin{center}
$(k\alpha-\frac{k(k-1)}{2}) (\frac{1}{k} Id\otimes Id)
+\sum_{j=1}^{\frac{k(k-1)}{2}} (\frac{1}{k} Id\otimes Id)+(iA_j)\otimes (iA_j).$
\end{center}

Thus, $C$ is separable as a sum of separable matrices.

\end{proof}
\begin{remark} Actually, $C=\alpha Id\otimes Id + \frac{1}{2}(T-uu^t)$, $\alpha\geq\frac{2^n-1}{2}$, is also  separable in the multipartite case. 
Notice that every matrix in the basis of $ASym(2^n)$, constructed in lemma \ref{basis}, has tensor rank $1$ in $M_2\otimes\ldots\otimes M_2\simeq M_{2^n}$ and also the $Id$. Therefore $Id\otimes Id$ and $(iA_j)\otimes (iA_j)$ have tensor rank $1$ in $M_2\otimes\ldots\otimes M_2\simeq M_{2^{2n}}$. Thus,  $\frac{1}{2^n} Id\otimes Id+(iA_j)\otimes (iA_j)$ has tensor rank smaller or equal to 2 in $M_2\otimes\ldots\otimes M_2\simeq M_{2^{2n}}$ and is positive semidefinite. By corollary 4.8 in \cite{cariello}, $\frac{1}{2^n} Id\otimes Id+(iA_j)\otimes (iA_j)$ is separable in $M_2\otimes\ldots\otimes M_2\simeq M_{2^{2n}}$. Therefore $C=\alpha Id\otimes Id + \frac{1}{2}(T-uu^t)$ is also  separable in $M_2\otimes\ldots\otimes M_2\simeq M_{2^{2n}}$.
\end{remark}

\section* {Summary}

In this paper we investigated the relationship between SPC matrices and PPT matrices. 

We proved that every SPC matrix in $M_2\otimes M_2$ is PPT and separable. Thus, in some sense symmetry implies separability in $M_2\otimes M_2$. This result follows from the fact that every density matrix with tensor rank smaller or equal to 3 in $M_2\otimes M_m$  is separable. 
However, we provided an example of SPC matrix with tensor rank 3 in $M_3\otimes M_3$ that is not PPT. 

In the last section, we showed a non trivial example of a family of matrices in $M_k\otimes M_k$
in which the SPC property is equivalent to the PPT property. Inside this family we found a subfamily, in which the SPC property is equivalent to separability.

\vspace{12pt}
\textbf{Acknowledgement.}
D. Cariello was supported by CNPq-Brazil Grant 245277/2012-9.

\begin{bibdiv}
\begin{biblist}

\bib{cariello}{article}{
   author={Cariello, D.},
   title={Separability for Weak Irreducible Matrices},
   pages={arxiv 1311.7275},  
   journal={to appear in
   Quantum Information \& Computation,}
}

\bib{kossakowski}{article}{
   author={Chru\'sci\'nski, D.},
   author={Kossakowski, A.},
   title={Class of positive partial transposition states},
   journal={Phys. Rev. A.},
   volume={74},
   date={2006},
   pages={022308}
  
}

\bib{kossakowski2}{article}{
  title = {Quantum states with strong positive partial transpose},
  author = {Chru\ifmmode \acute{s}\else \'{s}\fi{}ci\ifmmode \acute{n}\else \'{n}\fi{}ski, Dariusz}, 
  author={ Jurkowski, J.} 
  author={Kossakowski, A.},
  journal = {Phys. Rev. A},
  volume = {77},
  issue = {2},
  pages = {022113},
  numpages = {4},
  year = {2008}
}

\bib{kraus}{article}{
   author={Kraus, B.},
   author={Cirac, J. I.},
   author={Karnas, S.},
   author={Lewenstein, M.},
   title={Separability in $2\times N$ composite quantum systems},
   journal={Phys. Rev. A (3)},
   volume={61},
   date={2000},
   number={6},
   pages={062302-10},
 %  issn={1050-2947},
   %review={\MR{1767463 (2001c:81018)}},
  % doi={10.1103/PhysRevA.61.062302},
}

\bib{hildebrand}{article}{
   author={Hildebrand, R.},
   title={Positive partial transpose from spectra},
   journal={Phys. Rev. A.},
   volume={76},
   date={2007},
   pages={052325}
  
}

\bib{horodecki}{article}{
   author={Horodecki, Micha{\l}},
   author={Horodecki, Pawe{\l}},
   author={Horodecki, Ryszard},
   title={Separability of mixed states: necessary and sufficient conditions},
   journal={Phys. Lett. A},
   volume={223},
   date={1996},
   number={1-2},
   pages={1--8},
%   issn={0375-9601},
 %  review={\MR{1421501 (97k:81009)}},
  % doi={10.1016/S0375-9601(96)00706-2},
}

\bib{Peres}{article}{
   author={Peres, A.},
   title={Separability criterion for density matrices},
   journal={Phys. Rev. Lett.},
   volume={77},
   date={1996},
   pages={1413-1415},
  
}

\bib{Steinhoff}{article}{
   author={ Steinhoff, F.E.S.},
   author={ de Oliveira, M.C.},
   title={Families of bipartite states classifiable by the positive partial transposition criterion},
   journal={Quantum Information \& Computation},
   volume={10},
   date={2010},
   pages={525-538},
  
}

\bib{guhne}{article}{
   author={T\'oth, G.}
   author={G\"uhne, O.},
   title={Separability criteria and entanglement witnesses for symmetric quantum states},
   journal={Applied Physics B},
   volume={98},
   date={2010},
   number={4},
   pages={617-22},
  
}

\bib{Verstraete}{article}{
   author={Verstraete, F.},
    author={Audenaert, K.},
     author={De Moor, B.},
   title={Maximally entangled mixed states of two qubits},
   journal={Phys. Rev. A.},
   volume={64},
   date={2001},
   pages={012316}
  
}

\end{biblist}
\end{bibdiv}

\end{document}